\theoremstyle{plain}
\newtheorem{theorem}{Theorem} 
\newtheorem{lemma}{Lemma}
\setlist[enumerate]{nosep}
\DeclareMathOperator{\outl}{out}
\DeclareMathOperator{\inl}{in}
\DeclareMathOperator{\lca}{lca}
\DeclareMathOperator{\dom}{dom}
\DeclareMathOperator{\slope}{slope}
\DeclareMathOperator{\parent}{par}
\renewcommand\paragraph{\@startsection{paragraph}{4}{\z@}%
                                    {\topsep}%
                                    {-1em}%
                                    {\normalsize\bfseries}}
\renewcommand\subparagraph{\@startsection{subparagraph}{5}{\z@}%
                                    {\topsep}%
                                    {-1em}%
                                    {\normalfont\normalsize\itshape}}
\title{Drawing Planar Graphs with Few Geometric Primitives\thanks{%
A preliminary version of this work appeared in: Proc. 43rd Int. Workshop Graph-Theor. Concepts Comput. Sci. (WG'17)~\cite{hkms-dpgfg-WG17}.
The work of P.~Kindermann and A.~Schulz was supported by DFG grant SCHU~2458/4-1.
  The work of W.~Meulemans was supported by Marie Sk\l{}odowska-Curie Action MSCA-H2020-IF-2014 656741.}}
\author[1]{Gregor~H\"ultenschmidt}
\author[2]{Philipp~Kindermann}
\author[3]{Wouter~Meulemans}
\author[1]{Andr\'e~Schulz}
\affil[1]{LG Theoretische Informatik, FernUniversit\"at in Hagen, Germany\\
\texttt{\href{mailto:gregorhueltenschmidt@gmx.de}{gregorhueltenschmidt@gmx.de}\\ \href{mailto:andre.schulz@fernuni-hagen.de}{andre.schulz@fernuni-hagen.de}}}
\affil[2]{David R. Cheriton School of Computer Science, University of Waterloo, Canada\\
\texttt{\href{mailto:pkinderm@uwaterloo.ca}{pkinderm@uwaterloo.ca}}}
\affil[3]{TU Eindhoven, The Netherlands\\
\texttt{\href{mailto:w.meulemans@tue.nl}{w.meulemans@tue.nl}}}
\begin{document}

\maketitle

\begin{abstract}
We define the \emph{visual complexity} of a plane graph drawing to be the number of basic geometric objects needed to represent all its edges.
In particular, one object may represent multiple edges (e.g., one needs only one line segment
to draw a path with an arbitrary number of edges). Let~$n$ denote the number of vertices of a graph.
We show that trees can be drawn with~$3n/4$ straight-line segments on a polynomial grid, and with~$n/2$ straight-line segments on a quasi-polynomial grid. Further, we present an algorithm for drawing planar 3-trees with $(8n-17)/3$ segments on
an~$O(n)\times O(n^2)$ grid. This algorithm can also be used with a small modification to draw maximal outerplanar
graphs with $3n/2$ edges on an~$O(n)\times O(n^2)$ grid.
We also study the problem of drawing maximal planar graphs with circular arcs and provide an algorithm to draw such graphs using only $(5n - 11)/3$ arcs.
This is significantly smaller than the lower bound of $2n$ for line segments for a nontrivial graph class.
\end{abstract}

\section{Introduction}

The complexity of a graph drawing can be assessed in a variety of ways: area, crossing number, bends, angular resolution, etc.
All these measures have their justification, but in general it is impossible to optimize all of them in a single drawing. More
recently, the \emph{visual complexity} was suggested as another quality measure for drawings~\cite{schulz2015}. The visual complexity denotes the number of simple geometric entities used in the drawing.

\begin{table}[t]
\centering
\caption{Upper bounds on the visual complexity. Here, $n$ is the number of vertices, $\vartheta$ the number of odd-degree vertices and $e$ the number of edges. Constant additions or subtractions have been omitted.}
\label{tbl:results}
\medskip
\begin{tabular}{l r c r  c}
\toprule
\textbf{Class} & \multicolumn{2}{c}{\textbf{Segments}\qquad\qquad\qquad} & \multicolumn{2}{c}{\textbf{Arcs}}\\
\midrule
trees & $\vartheta/2$ & \cite{dujmovic2007} &$\vartheta/2$  & \cite{dujmovic2007} \\
maximal outerplanar & $n$ & \cite{dujmovic2007} & $n$  & \cite{dujmovic2007} \\
3-trees & $2n$ & \cite{dujmovic2007} & $11e/18$ &\cite{schulz2015} \\
3-connected planar & $5n/2$ & \cite{dujmovic2007} & $2e/3$ & \cite{schulz2015} \\
cubic 3-connected planar & $n/2$ & \cite{igamberdiev2015,mondal2013} &  $n/2$ & \cite{igamberdiev2015,mondal2013}  \\
triangulation & $7n/3$ & \cite{durocher2014} & $\bf 5n/3$ & \bf Thm.~\ref{thm:triangle-arc} \\
4-connected triangulation & $9n/4$ & \cite{durocher2014} & $\bf 3n/2$ & \bf Thm.~\ref{thm:4-conn-triangle-arc} \\
4-connected planar & $9n/4$ & \cite{durocher2014} & $\bf 9n/2 - e$ & \bf Thm.~\ref{thm:4-conn-planar-arc} \\
planar & $16n/3 - e$ & \cite{durocher2014} & $\bf 14n/3 - e$ & \bf Thm.~\ref{thm:planar-arc} \\
\bottomrule
\end{tabular}
\end{table}

Typically, we consider as entities either straight-line segments or circular arcs.
To distinguish these two types of drawings, we call the former
\emph{segment drawings} and the latter \emph{arc drawings}.
The idea is that we can use, for example, a single segment to draw a path of collinear edges. The hope
is that a drawing that consists of only a few geometric entities is easy to perceive.
A recent user study~\cite{kms-eaadf-gd17} suggests that visual complexity may positively influence
aesthetics, depending on the background of the observer, as long as it does not
introduce unnecessarily sharp corners.
It is natural to ask for a drawing of a graph with the smallest visual complexity.
Unfortunately, it is an ${\sf NP}$-hard problem to determine the smallest number of segments necessary in a segment drawing~\cite{durocher2013}.
However, we can still expect to find algorithms, which guarantee bounds for certain graph classes.

\paragraph{Related work.}
For a number of graph classes, upper and lower bounds are known for segment
drawings and arc drawings; the upper bounds are summarized in
Table~\ref{tbl:results}. However, these bounds (except for cubic 3-connected
graphs) do not require the drawings to be on the grid.
In his thesis, Mondal~\cite{mondal-thesis} gives an algorithm for triangulations
with $n$ vertices using
$8n/3-O(1)$ segments on a grid of size $2^{O(n\log n )}$ in general and
of size $2^{O(n)}$ for triangulations of bounded degree. Even with this large
grid, the algorithm uses
substantially more segments than the best-known algorithm for triangulations
without the grid requirement by
Durocher and Mondal~\cite{durocher2014}, which uses $7n/3-O(1)$ segments.

There are three trivial lower bounds for the number of segments required to
draw any graph~$G=(V,E)$ with~$n$ vertices and~$e$ edges:
\begin{enumerate}[label=(\roman*)]
	\item $\vartheta/2$, where $\vartheta$ is the number of odd-degree vertices,
  \item $\max_{v\in V} \lceil \deg(v)/2\rceil$, and
  \item $\lceil e/(n-1)\rceil$.
\end{enumerate}
For triangulations and general planar graphs, a lower bound of $2n - 2$ and
$5n/2 - 4$, respectively, is known \cite{dujmovic2007}.
Note that the trivial lower bounds are the same as for the slope number
of graphs~\cite{wc-dcgms-CJ94}, that is, the minimum number of slopes required
to draw all edges, and that the number of slopes of a drawing is upper
bounded by the number of segments.
Chaplick et al.~\cite{cflrvw-dgflf-GD16,cflrvw-cdgfl-WADS17} consider a
similar problem where all edges are to be covered by few lines (or planes);
the difference to our problem is that collinear segments are counted only once in their model.
In the same fashion, Kryven et al.~\cite{krw-dgfcf-CALDAM18} aim to cover all
edges by few circles (or spheres).

\paragraph{Contributions.} In this work, we present two types of results. In the first part (Sections~\ref{sect:trees},~\ref{sect:3trees} and~\ref{sect:outerplanar}),
we present algorithms for segment drawings on the grid with low visual complexity.
This direction of research was posed as an open problem
by~Dujmovi{\'c} et al.~\cite{dujmovic2007}, but only a few results exist; see Table~\ref{tab:gridresults}.
We present an algorithm that draws trees on an $O(n^2)\times O(n^{1.58})$ grid using $3n/4$ straight-line segments. For comparison, the drawings of Schulz~\cite{schulz2015} need also~$3n/4$ arcs on a smaller $O(n)\times O(n^{1.58})$ grid, but use the more complex circular arcs instead.  Our segment drawing algorithm for trees can be modified to generate drawings with an optimal number of $\vartheta/2$ segments on a quasi-polynomial grid.
We also present algorithms to compute segment drawings of planar 3-trees and maximal outerplanar
graphs, both on an~$O(n)\times O(n^2)$ grid. In the case of planar 3-trees, the algorithm
needs at most~$(8n-17)/3$ segments, and in the case of maximum outerplanar graphs the algorithm
needs at most~$3n/2$ segments.

Finally, in sections~\ref{sect:triangulations} and~\ref{sect:grapharcs}, we study arc drawings of triangulations and general planar graphs.
In particular, we prove that $(5n-11)/3$ arcs are sufficient to draw any triangulation with $n$ vertices.
We highlight that this bound is significantly smaller than the $2n - 2$ \emph{lower} bound known for segment drawings~\cite{dujmovic2007} and the so far best-known $2e/3 = 2n$ upper bound for circular arc drawings~\cite{schulz2015}.
A straightforward extension shows that $(14n - 3e - 29)/3$ arcs are sufficient for general planar graphs with $e$ edges.

\begin{table}[t]
	\centering
\caption{Same as in Table~\ref{tbl:results} but for grid drawings.}
\label{tab:gridresults}
\begin{tabular}{ll@{\quad}l@{\quad}l@{\quad}l}
\toprule
\bf Class & \bf Type & \bf Compl. & \bf Grid & \bf Ref. \\
\midrule
trees & arcs &  $3n/4$& $O(n)\times O(n^{1.58})$ & \cite{schulz2015} \\
trees & segments &  $\bf 3n/4$& $\bf O(n^2)\times O(n^{1.58})$ &
\bf Thm.~\ref{thm:treepoly} \\
trees & segments &  $\bf \vartheta/2$& \bf quasi-polynom. &
\bf Thm.~\ref{thm:treequasipoly} \\
cubic planar 3-conn. & segments &  $n/2$ & $O(n)\times O(n)$ &
\cite{igamberdiev2015,mondal2013} \\
maximal outerplanar & segments &  $\bf 3n/2$ & $\bf O(n)\times O(n^2)$ &
\bf Thm.~\ref{thm:outerplanar} \\
triangulation & segments & $ 8n/3$ & $O((3.63n)^{4n/3})$ &
\cite{mondal-thesis} \\
planar 3-tree & segments & $\bf 8n/3$ & $\bf O(n)\times O(n^2)$ &
\bf Thm.~\ref{thm:planar3tree} \\
\bottomrule
\end{tabular}
\end{table}

\paragraph{Preliminaries.}
Given a triangulated planar graph~$G=(V,E)$ on~$n$ vertices, a \emph{canonical
order} $\sigma=(v_1,\ldots,v_n)$ is an ordering of the vertices in~$V$ such
that, for~$3\le k\le n$, (i)~the subgraph~$G_k$ of~$G$ induced
by~$v_1,\ldots,v_k$ is biconnected, (ii)~the outer face of~$G_k$ consists of
the edge~$(v_2,v_1)$ and a path~$C_k$, called \emph{contour}, that contains~$v_k$,
and (iii)~the neighbors of~$v_k$ in~$G_{k-1}$ form a subpath
of~$C_{k-1}$~\cite{dfpp-sssfe-STOC88,dpp-hdpgg-C90}. A canonical order can be
constructed in reverse order by repeatedly removing a vertex without an incident
chord from the outer face.

Most of our algorithms make ample use of \emph{Schnyder realizers}~\cite{schnyder1990}. Assume we selected a face as the outer
face with vertices $v_1$, $v_2$ and $v_n$.
We decompose the interior edges into three trees: $T_1$, $T_2$, and $T_n$ rooted
at $v_1$, $v_2$, and $v_n$, respectively. The edges of the trees are oriented to their roots. For $k\in\{1,2,n\}$, we call each edge in~$T_k$ a \emph{$k$-edge}
and the parent of a vertex in~$T_k$ its \emph{$k$-parent}.
In the figures of this paper, we will draw 1-edges red, 2-edges blue, and
$n$-edges green. The decomposition is a Schnyder realizer if at every interior vertex the edges are cyclically ordered as: outgoing 1-edge, incoming $n$-edges, outgoing 2-edge, incoming 1-edges, outgoing $n$-edge, incoming 2-edges. The trees of a Schnyder realizer are also called
\emph{canonical ordering trees}, as each describes a canonical order on the
vertices of~$G$ by a (counter-)clockwise pre-order traversal~\cite{fraysseix2001}.
There is a unique
\emph{minimal} realizer such that any interior cycle in the union of the three
trees is oriented clockwise \cite{brehm2000}; this realizer can be computed in
linear time \cite{brehm2000,schnyder1990}. The number of such cycles is denoted
by $\Delta_0$ and is upper bounded by $\lfloor(n-1)/2\rfloor$~\cite{zhang2005}.
Bonichon~et~al.~\cite{bonichon2002} prove that the total number of leaves in a
minimal realizer is at most $2n-5-\Delta_0$.

\section{Trees with segments on the grid}\label{sect:trees}

Let~$T=(V,E)$ be an undirected tree. Our algorithm follows the basic idea of the
circular arc drawing algorithm
by Schulz~\cite{schulz2015}. We make use of the \emph{heavy path
decomposition}~\cite{tarjan1983} of trees, which is defined as follows.
First, root the tree at some vertex~$r$ and direct all edges away from the root.
Then, for each non-leaf~$u$, compute the
size of each subtree rooted in one of its children. Let~$v$ be the child of~$u$
with the largest subtree (one of them in case of a tie).
Then,~$(u,v)$ is called a \emph{heavy edge} and
all other outgoing edges of~$u$ are called \emph{light edges}.
The maximal connected
components of heavy edges form the \emph{heavy paths} of the decomposition.

We call the vertex of a heavy path closest to the root its \emph{top node}
and the subtree rooted in the top node a \emph{heavy path subtree}. We define
the \emph{depth} of a heavy path (subtree) as follows. We treat
each leaf that is not incident to a heavy edge as a heavy path of depth~0.
The depth of each other heavy path~$P$ is by~1 larger than the maximum depth of
all heavy paths that are connected from~$P$ by an outgoing light edge.
Heavy path subtrees of common depth are disjoint.

\paragraph{Boxes.}
We order the heavy paths nondecreasingly by their depth and then draw their
subtrees in this order. Each heavy path subtree is placed completely inside an
L-shaped box (\emph{heavy path box}) with its top node placed at the
reflex angle; see Fig.~\ref{fig:lshape-1}
for an illustration of a heavy path box~$B_i$ with top node~$u_i$,
width~$w_i=\ell_i+r_i$, and
height~$h_i=t_i+b_i$.
We require that
\begin{enumerate}[label=(\roman*)]
	\item heavy-path boxes of common depth are disjoint,
	\item $u_i$ is the only vertex on the boundary of $B_i$, and
	\item $b_i\ge t_i$.
\end{enumerate}
Note that the boxes will be mirrored horizontally and/or vertically in
some steps of the algorithm. We assign to each heavy path subtree of depth~0 a
heavy path box~$B_i$ with $\ell_i=r_i=t_i=b_i=1$.

\begin{figure}[t]
	\centering
	\begin{subfigure}[b]{.3\linewidth}
		\centering
		\includegraphics[page=1]{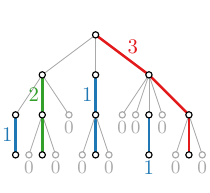}
		\caption{A heavy path decomposition and the depths.}
		\label{fig:heavypath}
	\end{subfigure}
	\hfill
	\begin{subfigure}[b]{.3\linewidth}
		\centering
		\includegraphics[page=1]{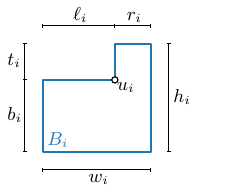}
		\caption{The heavy path box~$B_i$ with top node~$u_i$.}
		\label{fig:lshape-1}
	\end{subfigure}
	\hfill
	\begin{subfigure}[b]{.3\linewidth}
		\centering
		\includegraphics[page=2]{lshape}
		\caption{The merged box~$B^*_i$ for~$B_{2i-1}$ and~$B_{2i}$.}
		\label{fig:lshape-2}
	\end{subfigure}
	\caption{The heavy-path boxes and merged boxes with their respective lengths.}
	\label{fig:lshape}
\end{figure}

\paragraph{Drawing.}
Since a heavy path subtree of depth~$0$ consists of exactly one vertex
that is placed in a box as described above, we have already drawn all
heavy path subtrees of depth~$0$. We will inductively draw the heavy path
subtrees by their depth.
Assume that we have already drawn each heavy path subtree of depth~$d\ge 0$.
In the $(d+1)$-th step of our algorithm, we will draw all heavy path subtrees
of depth~$d+1$. Let $\langle v_1,\ldots,v_m\rangle$ be a heavy path of
depth~$d+1$. We proceed as follows; see Fig.~\ref{fig:treerecursion-1} for an
illustration. The last vertex on a heavy path has to be a leaf, so~$v_m$
is a leaf.

If the outdegree of $v_{m-1}$ is odd, then we place
the vertices $v_1,\ldots,v_m$ on a vertical line; otherwise, we place only
the vertices $v_1,\ldots,v_{m-1}$ on a vertical line and treat~$v_m$ as a
heavy path subtree of depth~0 that is connected to~$v_{m-1}$.

For $1\le h\le m-1$,
all heavy-path boxes adjacent to~$v_h$ will be drawn either in a rectangle
on the left side of the edge $(v_h,v_{h+1})$ or in a rectangle on the right side
of the edge $(v_{h-1},v_h)$ (a rectangle that has~$v_1$ as its bottom left
corner for $h=1$).

We now describe how to place the heavy-path boxes~$B_1,\ldots,B_k$ with top
node~$u_1,\ldots,u_k$, respectively, adjacent to
some vertex~$v$ on a heavy path into the rectangles described above;
see Fig.~\ref{fig:treerecursion-2} for an illustration.

First, assume that~$k$ is even. Then, for $1\le i\le k/2$, we order the boxes
such that $b_{2i}\le b_{2i-1}$. We place the
box~$B_{2i-1}$ in the lower left rectangle and box~$B_{2i}$ in the upper
right rectangle in such a way that the edges $(v,u_{2i-1})$ and $(v,u_{2i})$
can be drawn with a single segment. To this end, we construct a \emph{merged}
box~$B^*_i$ as depicted in Fig.~\ref{fig:lshape-2} with
$\ell_i^*=\max\{\ell_{2i-1},\ell_{2i}\}$, $r_i^*=\max\{r_{2i-1},r_{2i}\}$,
and $w_i^*=\ell_i^*+r_i^*$; the heights are defined analogously.
The merged boxes will help us reduce the number of segments.

We mirror all merged boxes horizontally and place them in the lower left
rectangle (of width $\sum_{i=1}^{k/2}w^*_i$) as follows.
We place~$B_1^*$ in the top left corner of the rectangle.
For~$2\le j\le k/2$, we place~$B_j^*$ directly to the right of~$B_{j-1}^*$
such that its top border lies exactly~$t_{j-1}^*$ rows below the top border
of~$B_{j-1}^*$. Symmetrically, we place the merged boxes (vertically mirrored)
in the upper right rectangle.

Finally, we place each box~$B_{2i-1}$
(horizontally mirrored) in the lower left copy of $B_i^*$ such that their inner
concave angles coincide, and we place each box~$B_{2i}$ (vertically mirrored) in
the upper right copy of $B_i^*$ such that their inner concave angles coincide.

If~$k$ is odd, then we simply add a dummy
box~$B_{k+1}=B_k$ that we remove afterwards.

The box of the heavy path subtree $\langle v_1,\ldots,v_m\rangle$ is the smallest
box that contains $v_1,\ldots,v_m$ as well as the rectangles next to these
vertices (see Fig~\ref{fig:treerecursion-1}). We proceed the same way for every other heavy path subtree of
depth~$d+1$; thus, we have obtained a drawing and its box for every
heavy path subtree of depth~$d+1$.

\begin{figure}[t]
	\centering
	\subcaptionbox{
		\label{fig:treerecursion-1}}
		{\centering
		\includegraphics[page=1]{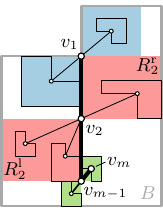}
  }
	\hfil
	\subcaptionbox{
		\label{fig:treerecursion-2}}
		{\centering
		\includegraphics[page=2]{treerecursion}
	}
	\caption{(a) Placement of a heavy path, its box~$B$, and areas for the adjacent heavy path
	  boxes. (b) Placement of the heavy-path boxes adjacent to~$v$.}
	\label{fig:treerecursion}
\end{figure}

\paragraph{Analysis.}
We will now calculate the width and the height of this construction.

\begin{lemma}\label{lem:area}
  Any heavy path subtree of
  depth~$d$ that contains~$n'$ vertices is drawn in a heavy-path box~$B$ with
  width $w=\ell+r\le 2^{d+1}\cdot n'$ and height $h=b+t\le 2\cdot (3/2)^d \cdot n'$
  and~$b\ge t$, where~$\ell,r,b,t$ are defined as in Fig.~\ref{fig:lshape-1}.
\end{lemma}
\begin{proof}
  We prove the lemma by induction over~$d$.

  Recall that we place the depth-0 heavy paths in a box of width $2=2^1\cdot 1$
  and height $2=2\cdot (3/2)^0\cdot 1$, so the bounds hold for~$d=0$.
  Assume that the bound holds for all heavy path subtrees of depth~$d\ge 0$.

  Let~$\langle v_1,\ldots,v_m\rangle$ be a heavy path of depth~$d+1$.
  Consider the drawing of the heavy path subtree rooted in~$v_1$ as described above.
  Let~$R_j^\mathrm{l}$ be the left rectangle
  and let~$R_j^\mathrm{r}$ be the right rectangle that contain all heavy-path
  boxes adjacent to~$v_j$ for~$1\le j\le m-1$; see Fig.~\ref{fig:treerecursion-1}. We denote by~$w_j^\mathrm{l}$
  and~$h_j^\mathrm{l}$ the width and the height of~$R_j^\mathrm{l}$, respectively,
  and by~$w_j^\mathrm{r}$
  and~$h_j^\mathrm{r}$ the width and the height of~$R_j^\mathrm{r}$, respectively.
  Since~$v_m$ is a leaf, we have~$w_m^\mathrm{l}=w_m^\mathrm{r}=h_m^\mathrm{l}=h_m^\mathrm{r}=1$.

  We will first analyze the width and height of these rectangles for a fixed
  vertex~$v_j$, $1\le j\le m-1$. Let~$u_1^j,\ldots,u_{k_j}^j$ be the children of~$v_j$
  and denote the number of nodes in the heavy path subtrees rooted in them
  by~$n_1^j,\ldots,n_{k_j}^j$.
  Let~$n'_j=\sum_{i=1}^{k_j}n_i^j$; note that $\sum_{j=1}^m n'_j=n'-m$.
  By induction, we have~$w_i^j\le 2^{d+1}\cdot n_i^j$
  for $1\le i\le k_j$ and $h_i^j=b_i^j+t_i^j\le 2\cdot (3/2)^d \cdot n_i^j$.

  For the width of the rectangles, we obtain
  \begin{flalign*}
    w_j^\mathrm{l} = w_j^\mathrm{r}
    &\le 1+\sum_{i=1}^{k_j/2}w_i^{j*}
    =1+\sum_{i=1}^{k_j/2}\max\{w_{2i-1}^j,w_{2i}^j\}
    \le 1+\sum_{i=1}^{k_j}w_i^j\\
    &\le 1+\sum_{i=1}^{k_j}(2^{d+1}\cdot n_i^j)
    =1+2^{d+1}\sum_{i=1}^{k_j} n_i^j\\
    &\le 2^{d+1}\left(1+n'_j\right).
  \end{flalign*}

  The height of each rectangle in the construction is at least
  $2\sum_{i=1}^{k_j/2}t_i^{j*}$, but we have to add a bit more for the bottom parts of
  the boxes; in the worst case, this is $\max_{1\le \lambda\le k_j/2}b_{2\lambda-1}^j$ in~$R_j^\mathrm{l}$
  and $\max_{1\le \mu\le k_j/2}b_{2\mu}$ in~$R_j^\mathrm{r}$. Since our constructions
  ensures that $b_i^j\ge t_i^j$ , we have
  \begin{flalign*}
  h_j^\mathrm{l} + h_j^\mathrm{r}
    &\le 1+2\sum_{i=1}^{k_j/2}t_i^{j*}+\max_{1\le \lambda\le k_j/2}b_{2\lambda-1}^j+\max_{1\le \mu\le k_j/2}b_{2\sigma}^j\\
    &\le 1+2\sum_{i=1}^{k_j}t_i^j+\sum_{i=1}^{k_j}b_i^j
    \le1 + x\frac32\sum_{i=1}^{k_j} h_i^j\\
    &\le1+\frac32 \sum_{i=1}^{k_j} \left(2\cdot\left(\frac32\right)^d\cdot n_i^j\right)
    =1+2\cdot\left(\frac32\right)^{d+1} \sum_{i=1}^{k_j} n_i^j\\
    &\le 2\cdot\left(\frac32\right)^{d+1}\left(n'_j+1\right).
  \end{flalign*}

  We will now analyze the width and height of the heavy path box of the heavy path subtree
  rooted in~$v_1$. For the width, we have $w=\ell+r$ with
  $\ell\le \max_{1\le j\le m}w_j^\mathrm{l}$
  and $r\le \max_{1\le j\le m}w_j^\mathrm{r}=\ell$.
  Hence, we obtain
  \begin{flalign*}
    w&=2\ell=2\max_{1\le j\le m}w_j^\mathrm{l}\le
    2\sum_{j=1}^{m} w_j^\mathrm{l}\\
    &\le 2+2\sum_{j=1}^{m-1}\left(2^{d+1}\left(n'_j+1\right)\right)
    =2+2^{d+2}\sum_{j=1}^{m-1}\left(n'_j+1\right)\\
    &=2+2^{d+2}\cdot (n'-1)
    \le 2^{d+2}\cdot n'.
  \end{flalign*}

  For the height of the heavy path box, notice that the vertical distance between
  two vertices~$v_{j-1}$ and~$v_j$, $2\le j\le m$, is at
  most $\max\left\{h_{j-1}^\mathrm{l},h_{j}^\mathrm{r}\right\}$,
  while the vertical space below between~$v_{m}$ is at most $h_{m}^\mathrm{l}=1$.
  Hence, we have

  \begin{flalign*}
    h&=t+b\le
    h_1^\mathrm{r}+\sum_{j=2}^{m}\max\{h_{j-1}^\mathrm{l},h_{j}^\mathrm{r}\}+h_{m}^\mathrm{l}\\
    &\le \sum_{j=1}^{m}\left(h_j^\mathrm{l}+h_{j}^\mathrm{r}\right)
    = 2+\sum_{j=1}^{m-1}\left(h_j^\mathrm{l}+h_{j}^\mathrm{r}\right)\\
    &\le 2+\sum_{j=1}^{m-1}\left(2\cdot\left(\frac32\right)^{d+1} \left(n'_j+1\right)\right)
    = 2+2\cdot\left(\frac32\right)^{d+1}\left(n'-1\right)\\
    &\le 2\cdot\left(\frac32\right)^{d+1}n'.
  \end{flalign*}

  Since all heavy path trees of common depth are disjoint, the heavy-path boxes
  of common depth are as well. Further, we place only
  the top vertex of a heavy path on the boundary of its box. Finally, since we
  order the boxes such that $b^j_{2i-1}\ge b^j_{2i}$ for each~$i$ and~$j$,
  we have $h_j^\mathrm{l}\ge h_j^\mathrm{r}$ for each~$j$ and thus~$b \ge t$.
\end{proof}

Due to the properties of a heavy path decomposition, the maximum depth of any
heavy path is~$\lceil\log n\rceil$. Thus, by Lemma~\ref{lem:area},
the whole tree is drawn in a box of
width $2\cdot 2^{\lceil\log n\rceil}n=O(n^2)$ and height
$2\cdot (3/2)^{\lceil\log n\rceil}n=O(n^{1+\log 3/2})\subseteq O(n^{1.58})$.

By using the same heavy path decomposition as the algorithm of
Schulz~\cite{schulz2015} (that is, we choose the same heavy edge in case of a tie),
we have that our algorithm draws a path in~$T$ with one segment if and only if
the algorithm of Schulz draws this path with one circular arc. Hence, following
his analysis, our drawing uses at most
$\lceil3e/4\rceil=\lceil3(n-1)/4\rceil$ segments.

\begin{theorem}\label{thm:treepoly}
	Every tree admits a straight-line drawing that uses at most $\lceil3e/4\rceil$
	segments on an $O(n^2)\times O(n^{1.58})$ grid.
  This drawing can be computed in $O(n)$ time.
\end{theorem}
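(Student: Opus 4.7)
The plan is to formalize the construction developed above and verify its correctness by induction on the depth of the heavy path decomposition. I would root $T$ at an arbitrary vertex, compute its heavy path decomposition, and process heavy paths in order of nondecreasing depth, maintaining as an invariant that each heavy path subtree of depth $d$ sits in an L-shaped heavy path box $B_i$ whose top node is the only vertex on its boundary, whose dimensions satisfy $b_i \ge t_i$, and which is disjoint from all other depth-$d$ boxes.

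For the inductive step, I would take the child heavy path boxes $B_1,\dots,B_k$ incident to a vertex $v$ on a heavy path of depth $d+1$, sort each pair so that $b_{2i}\le b_{2i-1}$, and form the merged boxes $B_i^*$ exactly as in Figure~\ref{fig:lshape-2}. The key geometric check is that, once $B_{2i-1}$ (horizontally mirrored) and $B_{2i}$ (vertically mirrored) are placed so that their reflex corners coincide with the two reflex corners of $B_i^*$, the edges $(v,u_{2i-1})$ and $(v,u_{2i})$ are collinear through $v$; this reduces to verifying that the two top nodes sit on a line through $v$, which follows from the symmetric construction of $B_i^*$. I would then pack the $B_i^*$ in the two flanking rectangles as described. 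Disjointness of the new box from its sub-boxes and the non-crossing property follow from the packing together with the fact that all off-path subtrees at $v$ are placed on one of the two sides of the heavy path segment through $v$; the sorting step guarantees $b\ge t$ for the enclosing box.

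The grid bounds come directly from the recurrences derived in the discussion: $w \le 2\sum_{i=1}^k w_i$ and $h \le \tfrac{3}{2}\sum_{i=1}^k h_i$. Combined with the base case $\ell_i=r_i=t_i=b_i=1$ for depth-$0$ leaves and the depth bound $\lceil \log n\rceil$ for heavy path decompositions, an easy induction on $d$ shows that a depth-$d$ subtree on $n'$ vertices fits in a $2 \cdot 2^d n' \times 2 \cdot (3/2)^d n'$ box, giving an overall $O(n^2)\times O(n^{1.58})$ grid for the whole tree.

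For the segment count, I would invoke the analysis of Schulz~\cite{schulz2015} essentially verbatim: each heavy path contributes a single straight-line segment (its edges are collinear by construction), and for each interior vertex $v$ of the tree the light children are paired so that each pair of light edges at $v$ is realized by a single segment shared at $v$. A global charging argument over the $e=n-1$ edges then yields at most $\lceil 3e/4\rceil$ segments. The main obstacle I expect is not the counting but checking planarity and the collinearity inside each merged box carefully; once those two geometric claims are nailed down, the theorem follows by assembling the inductive bounds on width, height, and segment count.
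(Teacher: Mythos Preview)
Your proposal is correct and follows essentially the same approach as the paper: heavy path decomposition, L-shaped boxes with the invariant $b_i\ge t_i$, pairing of light children via the merged boxes $B_i^*$ placed point-symmetrically about~$v$, the same width/height recurrences yielding the $O(n^2)\times O(n^{1.58})$ grid, and the reference to Schulz~\cite{schulz2015} for the $\lceil 3e/4\rceil$ segment count. The only thing you flag as needing care---planarity and collinearity inside the merged boxes---is exactly what the paper's construction is engineered to guarantee via the symmetric placement of the two copies of each~$B_i^*$.
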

\begin{proof}
  The existence of the drawing is already argued above; what remains is
  to prove the time bound. By traversing the tree bottom-up, we can
  calculate the heavy path decomposition in~$O(n)$ time.
	Then, we sort the heavy paths by their depth in~$O(n)$ time using, e.g.,
  counting sort since the depth is integer and bound by $O(\log n)$.
	
	When drawing a heavy path of level~$d$,
  we only have to place the adjacent heavy-path boxes of level~$d-1$
  around the vertices of the heavy path, which gives us their coordinates
  and the corresponding heavy path box. The number of these placement
  steps is equal to the number of light edges in the graph, which is~$O(n)$.
\end{proof}

We finish this section with an adjustment of our algorithm to get a grid
drawing with the best possible number of straight-line segments.
Observe that there is only one situation in which the
previous algorithm uses more segments than necessary, that is the top
node of each heavy path. The heavy path is always drawn vertically,
while the incoming light edge of the top node will be drawn with a
different slope; aligning the two slopes would reduce the number
of segments. This suboptimality can be ``repaired'' by
\emph{tilting} the heavy path as sketched in Fig.~\ref{fig:optimaltree}.
Note that the incident subtrees with smaller depth will only be translated.
However, when drawing a heavy path box, we do not know yet with which
slope the incoming light edge will be drawn.

\begin{figure}[t]
  \centering
  \includegraphics[page=1]{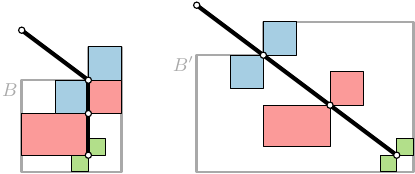}
  \caption{Further improvement on the visual complexity via increasing the size of the
boxes.}
  \label{fig:optimaltree}
\end{figure}

Our adjustment works as follows. We first draw each heavy path
vertically and then change its slope when placing the corresponding
heavy path box into a rectangle next to the parent of its top node.
Assume that we have already drawn each heavy
path subtree of depth~$d$ such that their heavy path is drawn vertically.
Let~$v$ be some vertex on a heavy path of depth~$d+1$ and
let~$B_1,\ldots,B_k$ be the heavy-path boxes with top
node~$u_1,\ldots,u_k$, respectively, adjacent to~$v$.

Note that the tilting of a heavy path only increases the right and the
bottom length of its heavy-path box; in particular, if~$B_i$ is a heavy
path box with width~$w_i=\ell_i+r_i$ and height~$h_i=t_i+b_i$ and~$B'_i$
is the tilted heavy path box with width~$w'_i=\ell'_i+r'_i$ and
height~$h'_i=t'_i+b'_i$, then we have $\ell'_i=\ell_i$, $r'_i>r_i$,
$t'_i=t_i$, and~$b'_i>b_i$. The $y$-coordinate of the
vertices~$u_1,\ldots,u_k$ after the placement depends only of the
values~$t_1\ldots,t_k$, so they will stay the same after the tilting
procedure. Hence, we can determine the $y$-coordinates by running the
previous drawing algorithm and then place the boxes in reverse order
(``inside-out'' instead of top-down). Then, the $x$-coordinate of each
vertex~$u_i$ depends only on the width of the already placed
boxes~$u_{i+1},\ldots,u_k$ and of~$r_i=r'_i$, so we know exactly by which
slope we have to tilt the box~$B_i$. Finally, to tilt the box~$B_i$,
we change the vector of each edge~$e$ on its heavy path to be an integer
multiple of the vector of the edge from~$v$ to~$u_i$ such that the vertical
length of~$e$ is not decreased. This keeps the rectangles that
contain the adjacent lower-level heavy-path boxes disjoint.

Unfortunately, this procedure increases the area of the drawing.
By tilting the heavy-path boxes, we have to blow up their size.
We are left with scaling in each ``round''
by a polynomial factor. Since there are only $\log n$ rounds, we obtain a drawing on a quasi-polynomial grid.
However, an implementation of the algorithm shows that some simple heuristics can already substantially reduce the drawing area, which gives hope that drawings on a polynomial grid exist for all trees; see Fig.~\ref{fig:implementation}.

\begin{figure}[b]
  \centering
  \subcaptionbox{The drawing produced by our algorithm.}{
    \includegraphics[page=3]{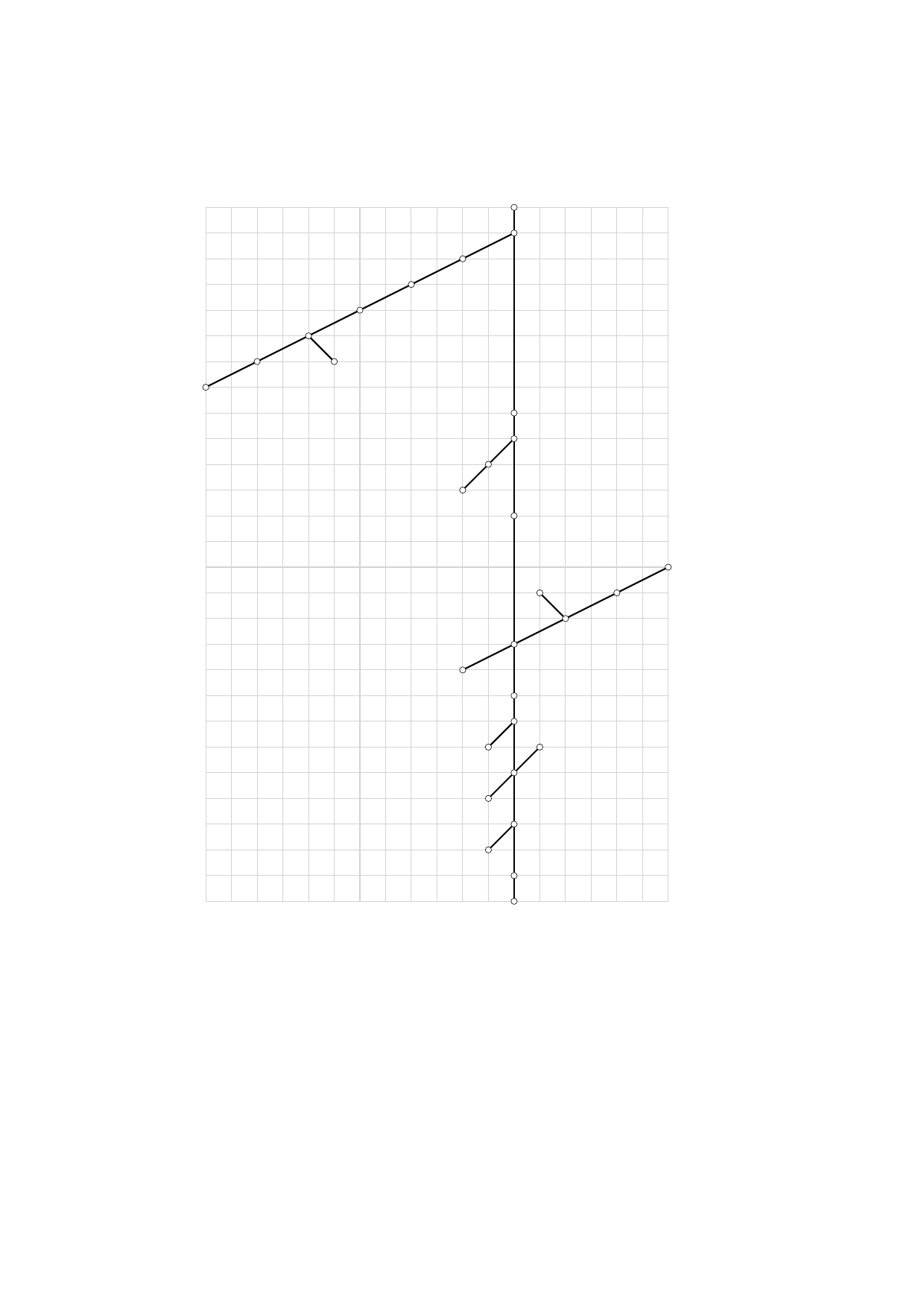}}
  \hfil
  \subcaptionbox{Heuristic improvement on the drawing area.}{
    \includegraphics[page=4]{implementation}}
  \caption{Drawings of a tree with 50 vertices with minimum number of segments.}
  \label{fig:implementation}
\end{figure}

\begin{theorem}\label{thm:treequasipoly}
	Every tree admits a straight-line drawing with the smallest number of straight-line segments on a quasi-polynomial grid. This drawing can be
  computed in~$O(n)$ time.
\end{theorem}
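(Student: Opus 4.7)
The plan is to refine the construction used for Theorem~\ref{thm:treepoly} so that the one remaining source of suboptimality, namely the top node of each heavy path, is eliminated. In the original drawing, the top node $v_1$ of a heavy path is the endpoint of a newly started (vertical) segment, even though the light edge $(w,v_1)$ from its parent $w$ on a higher-depth heavy path could be drawn collinear with $(v_1,v_2)$, fusing both into a single segment. I therefore \emph{tilt} each heavy path so that it becomes collinear with the incoming light edge, as sketched in Fig.~\ref{fig:optimaltree}.

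Concretely, heavy paths are still processed in order of nondecreasing depth, but each heavy path $P=\langle v_1,\ldots,v_m\rangle$ is first drawn along an abstract axis. Only when its parent $w$ has been placed, and the pairing at $w$ matching $(w,v_1)$ with some partner edge $(w,u)$ is fixed, do we orient $P$ along the line through $u$ and $w$, so that $u,w,v_1,\ldots,v_m$ become collinear. The L-shaped heavy-path boxes are replaced by their affine images under the corresponding tilt; all incident subtree drawings (already computed at strictly smaller depth) are translated and re-anchored along the tilted spine; and the pairings of outgoing light edges at each interior vertex are carried out relative to the tilted direction rather than the horizontal one. The disjointness and boundary properties of the boxes are preserved, since the tilt is an invertible affine map applied uniformly to the child subtree box.

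The main obstacle is maintaining integer coordinates under these tilts. The slope of $uw$ is a rational number whose denominator is bounded by the current bounding-box dimensions, hence polynomial in~$n$, but it is only known at stitching time. Before placing the child subtree at its tilted position, we therefore scale the already-computed child drawing uniformly by an integer large enough both to absorb this denominator and to leave enough slack so that the internal axis-aligned pairing gadgets survive the tilt on the grid. Each of the $\lceil \log n\rceil$ depth levels contributes only a polynomial factor to the scaling, so the cumulative scaling stays within $n^{O(\log n)}$, producing a drawing on a quasi-polynomial grid.

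A routine case analysis then yields the segment count. The incoming light edge of a heavy path, the edges of the heavy path itself, and the partner light edge of $(w,v_1)$ at $w$ all lie on one common segment; the two heavy edges at every interior heavy-path vertex are collinear by construction; and the remaining outgoing light edges at each vertex are paired collinearly, with at most one unpaired edge per vertex. This unpaired edge exists at $v$ exactly when $\deg(v)$ is odd. Hence every odd-degree vertex is the endpoint of exactly one segment and every even-degree vertex is the endpoint of none, giving exactly $\vartheta/2$ segments, which matches the trivial lower bound and is therefore optimal.
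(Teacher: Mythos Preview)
Your proposal is correct and follows essentially the same approach as the paper's sketch: tilt each heavy path to align with its incoming light edge, keep the already-drawn lower-depth subtrees rigid (translated) along the new spine, and absorb the resulting geometric slack and rational slopes by a polynomial blow-up per depth level, yielding $n^{O(\log n)}$ grid size. One small caution: your sentence justifying disjointness via ``an invertible affine map applied uniformly to the child subtree box'' is not quite the right reason---the spine vertices move by different amounts, so the transformation of $P$'s drawing is not a single affine map---but your earlier remark about scaling to ``leave enough slack'' is exactly the argument the paper uses, and it suffices.
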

\begin{proof}
  The existence of the drawing is argued above. The time bound is the
  same as in Theorem~\ref{thm:treepoly} as the only new step is tilting
  the heavy-path boxes, which changes the slope of each heavy edge at most
  once.
\end{proof}

\section{Planar 3-trees with few segments on the grid}\label{sect:3trees}

In this section, we show how to construct drawings of planar 3-trees with
few segments on the grid. We begin by introducing some notation.
A \emph{3-tree} is a maximal graph of treewidth~$k$, that is, no edges can
be added without increasing the treewidth. Each \emph{planar 3-tree} can be
produced from the complete graph~$K_4$ by repeatedly adding a vertex into a
triangular face and connecting it to all three vertices incident to this face.
This operation is also known as~\emph{stacking}. Any planar 3-tree admits exactly one Schnyder realizer, and it is cycle-free~\cite{brehm2000}, that is, directing all edges
according to~$T_1$, $T_2$, and~$T_n$ gives a directed acyclic graph.

\begin{figure}[b]
	\centering
	\subcaptionbox{}{\includegraphics[page=1]{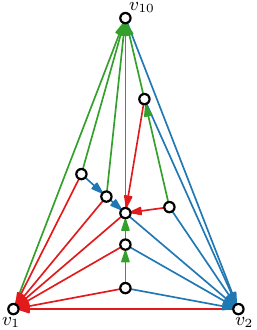}}
	\hfil
  \subcaptionbox{}{\includegraphics[page=2,trim=1.2cm 0 0 0,clip]{planar3tree}}
	\hfil
  \subcaptionbox{}{\includegraphics[page=3]{planar3tree}}
	\caption{Obtaining a canonical order from a clockwise pre-order traversal of a canonical ordering tree.
    (a) The canonical ordering trees of a planar 3-tree; (b)~the clockwise pre-order traversal on $T_2$;
    (c) the canonical order induced by this traversal.}
	\label{fig:planar3tree}
\end{figure}

Let $\mathcal T$ be a planar 3-tree.
Let $T_1$, $T_2$, and $T_n$ rooted at $v_1$, $v_2$, and $v_n$, respectively, be
the canonical ordering trees of the unique Schnyder realizer of~$\mathcal T$.
Recall that each canonical ordering tree describes a
canonical order on the vertices of~$T$ by a (counter-)clockwise pre-order
traversal~\cite{fraysseix2001}.
Without loss of generality, let~$T_1$ be the canonical ordering tree having
the fewest leaves, and let~$\sigma=(v_1,v_2,\ldots,v_n)$ be the canonical
order induced by a clockwise pre-order walk of~$T_2$; see
Fig.~\ref{fig:planar3tree}.
The fact that we choose the canonical order induced by~$T_2$ instead of~$T_1$
may seem counter-intuitive, but we will require a specific property of the
canonical order that is ensured only by this choice later.
The following lemma holds for any $v_k,4\le k\le n$.

\begin{lemma}\label{lem:contour}
  Let~$C_{k-1}=(w_1,\ldots,w_r)$ be the contour of~$G_{k-1}$,
  let~$w_p$ be the 1-parent of~$v_k$, and let~$w_q$
  be the 2-parent of~$v_k$. Then,
  \begin{enumerate}[label=(\alph*)]
    \item either $(w_q,w_p)\in T_1$ or $(w_p,w_q)\in T_2$
    \item the subgraph inside the triangle $(w_p,w_q,v_k)$ is a planar 3-tree
    \item if $q>p+1$, then $w_{p+1},\ldots,w_{q-1}$ lie inside the triangle
      $(w_p,w_q,v_k)$, $(w_{p+1},w_p)\in T_1$, and $(w_{q-1},w_q)\in T_2$.
  \end{enumerate}
\end{lemma}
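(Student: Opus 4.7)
The plan is to proceed by induction on~$k$, exploiting (i)~the rigidity of the unique, cycle-free Schnyder realizer of a planar 3-tree and (ii)~the standard Schnyder-realizer property that in any canonical order of a triangulation, the 1-parent and 2-parent of~$v_k$ are the extreme endpoints of its neighbor-subpath on~$C_{k-1}$, while every intermediate contour neighbor $w_j$ (for $p<j<q$) has $v_k$ as its $n$-parent. The base case $k=3$ is immediate: $v_3$'s only neighbors in~$G_2$ are $v_1$ and~$v_2$, so $p=1$, $q=2$; then~(c) is vacuous, (b)~is the triangle $(v_1,v_2,v_3)$ itself, and~(a) follows from the convention under which the outer edge $(v_1,v_2)$ is placed in~$T_1$ or~$T_2$.

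For the inductive step with $q>p+1$, I would first prove the edge-type part of~(c). Consider~$w_{p+1}$: by the observation above, $v_k$ is its $n$-parent, so $(w_{p+1},v_k)$ is the outgoing $n$-edge of~$w_{p+1}$ in~$T_n$. Schnyder's cyclic rule at~$w_{p+1}$ places its outgoing 1-edge immediately preceding this $n$-edge in clockwise order; combined with the inductive hypothesis applied to~$w_{p+1}$ at the step it was inserted, this forces the target of $w_{p+1}$'s outgoing 1-edge to be the contour vertex immediately to the left of~$w_{p+1}$ in~$C_{k-1}$, namely~$w_p$. Hence $(w_{p+1},w_p)\in T_1$, and a symmetric argument at~$w_{q-1}$ gives $(w_{q-1},w_q)\in T_2$.

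For~(a), I would iterate this ``peeling'' along the boundary of the region below~$v_k$: the $T_1$-edges chain along the left side, the $T_2$-edges chain along the right side, and the cycle-freeness of the minimal realizer rules out any configuration other than the boundary closing with a single edge $(w_p,w_q)$, which is necessarily of type~$T_1$ or~$T_2$ depending on whether the two chains meet on the left or on the right. The peeled vertices $w_{p+1},\ldots,w_{q-1}$ are then strictly enclosed by the cycle $w_p\to v_k\to w_q\to w_p$, proving the geometric part of~(c). Part~(b) follows because the subgraph inside this triangular region is built by a sub-sequence of stackings of~$\mathcal{T}$ entirely contained in the region (a consequence of planarity and of the Schnyder-parent structure), and hence is itself a planar 3-tree.

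The hardest part will be making the peeling argument for~(a) precise: one has to verify carefully that Schnyder's cyclic rule at the intermediate vertices, combined with the cycle-freeness of the minimal realizer, really does rule out all configurations in which the boundary below~$v_k$ fails to close into a single $T_1$- or $T_2$-edge.
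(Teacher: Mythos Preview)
Your plan diverges from the paper's and carries a genuine gap. The paper does \emph{not} argue by induction or by peeling. For~(a) it observes that, for any $w_o$ strictly between $w_p$ and $w_q$ on $C_{k-1}$, the set $\{w_p,w_q,v_k\}$ is a minimal $\{v_n\}$--$\{w_o\}$ separator in~$\mathcal T$, and then invokes Rose's theorem that every minimal separator in a $k$-tree is a $k$-clique. This yields the edge $(w_p,w_q)$ in one stroke; it cannot be an $n$-edge (that would close a cycle in $T_n$), so it lies in $T_1$ or $T_2$. Part~(b) is handled by citing Mondal et al.\ (every triangle in a planar 3-tree bounds a planar 3-tree). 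Part~(c) then \emph{follows} from~(a) and~(b): the $n$-edges from $w_{p+1},\ldots,w_{q-1}$ to $v_k$ force these vertices inside the triangle, and since the interior is a planar 3-tree whose unique Schnyder realizer has $w_p$ as its $T_1$-root, every interior edge incident to $w_p$ is an incoming 1-edge, giving $(w_{p+1},w_p)\in T_1$ (symmetrically for $w_{q-1}$).

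Your route reverses this order, and the trouble starts already in your argument for the edge types in~(c). Schnyder's cyclic rule at $w_{p+1}$ tells you only that the contour edge to $w_p$, being cyclically adjacent to the outgoing $n$-edge $(w_{p+1},v_k)$, is either the outgoing 1-edge of $w_{p+1}$ \emph{or} an incoming 2-edge; nothing in the rule excludes $(w_p,w_{p+1})\in T_2$. Your appeal to ``the inductive hypothesis applied to $w_{p+1}$ at the step it was inserted'' does not close this: the 1-parent of $w_{p+1}$ is its leftmost contour neighbour on $C_{j-1}$ at its own insertion step $j$, and there is no a~priori reason that vertex coincides with the vertex immediately left of $w_{p+1}$ on the much later contour $C_{k-1}$. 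In an arbitrary triangulation this configuration genuinely occurs; it is precisely the 3-tree separator structure (i.e., knowing~(a) and~(b)) that rules it out. Without the edge types in~(c) secured, your peeling argument for~(a) has no base to iterate from, so the circularity you anticipate is real. The fix is either to import the separator--clique argument for~(a) up front, or to first establish~(b) and the existence of the edge $(w_p,w_q)$ by a direct stacking argument and then read off the edge types from the sub-3-tree's unique realizer, as the paper does.
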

\begin{proof}
  Proof for (a): For $q=p+1$ the statement holds trivially. So we assume
  that there exists a vertex~$w_o$ on~$C_k$ between~$w_p$ and~$w_q$.
  Let~$X=\{v_n\}$ and~$Y=\{w_o\}$.
  Obviously, the contour~$C_k$ is an $X$-$Y$-separator; see
  Fig.~\ref{fig:neighborsedge}. Let $Z$ be a minimal $X$-$Y$-separator that
  contains only vertices from~$C_k$. By the properties of the Schnyder
  realizer, each incoming $n$-edge of a vertex on~$C_k$ comes from a vertex
  below the contour. Hence, there is a path from each vertex on~$C_k$ to~$v_n$
  in~$T_n$ that does not revisit~$C_k$. Further, the
  paths $(w_o,v_k$), $(w_o,w_{o-1},\ldots,w_p)$, and $(w_o,w_{o+1},\ldots,w_q)$
  connect~$w_o$ to each of~$v_k$,~$w_p$ and~$w_q$ without traversing
  any other vertex from~$C_k$. Hence, $Z$ has to contain $v_k$,~$w_p$, and~$w_q$;
  otherwise, a $v_n$--$w_o$-path would remain after deleting~$Z$.
  Rose~\cite{rose1974} has shown that
  every minimal $X$-$Y$-separator in a $k$-tree is a $k$-clique; thus,
  $Z$ has to be the 3-cycle~$(w_p,w_q,v_k)$.
  The edge $(w_p,w_q)$ cannot be an $n$-edge, since otherwise we would have a cycle in~$T_n$.
  So either $(w_q,w_p)\in T_1$ or $(w_p,w_q)\in T_2$.

\begin{figure}[t]
  \centering
  \includegraphics{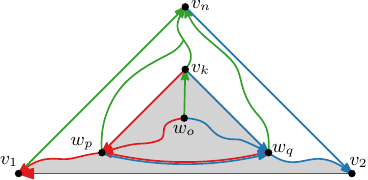}
  \caption{Proof that the edge~$(w_p,w_q)$ exists.}
  \label{fig:neighborsedge}
\end{figure}

  Proof for (b): Mondal et al.~\cite{mondal2013} have shown that this is true
  for every triangle in a planar 3-tree.

  Proof for (c): The vertices have to lie inside the triangle $(w_p,w_q,v_k)$
  since otherwise their outgoing $n$-edge to~$v_k$ would cross the
  edge~$(w_p,w_q)$. Further, since the subgraph inside~$(w_p,w_q,v_k)$ is again
  a planar 3-tree,~$w_p$, $w_q$, and~$v_k$ are the roots of the corresponding
  Schnyder realizer; $(w_{p+1},w_p)\in T_1$ and $(w_{q-1},w_q)\in T_2$
  follows immediately.
\end{proof}

Given some drawing~$\Gamma$ and two points~$a$ and~$b$ (that are not necessarily
a part of~$\Gamma$), we say that~$a$
\emph{sees}~$b$ if and only if the straight-line segment between~$a$ and~$b$ is
interior-disjoint from~$\Gamma$. If two points~$p$ and~$q$ have the same
$x$-coordinate and~$p$ lies above~$q$, then we define~$\slope(p,q)=-\infty$
and~$\slope(q,p)=+\infty$.
The following lemma was proven by Durocher and Mondal~\cite{durocher2014};
see Fig.~\ref{fig:polychain} for an illustration.

\begin{lemma}[\cite{durocher2014}]\label{lem:polychain}
  Let $a_1,\ldots,a_m$ be a strictly $x$-monotone polygonal
  chain~$C$. Let~$p$ be a point above~$C$ such that the segments~$a_1p$ and~$a_mp$
  do not intersect~$C$ except at~$a_1$ and~$a_m$. If the positive slopes of the
  edges of~$C$ are smaller than $\slope(a_1,p)$, and the negative slopes of the
  edges of~$C$ are greater than~$\slope(p,a_m)$, then every $a_i$ sees~$p$.
  \hfill\qed
\end{lemma}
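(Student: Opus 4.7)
The plan is to recast the visibility statement as a star-shapedness property. Let $P$ be the polygon whose boundary traverses $a_1,a_2,\ldots,a_m,p$; it is simple because $C$ is simple (being $x$-monotone), the two segments $a_1p$ and $pa_m$ meet $C$ only at their endpoints by hypothesis, and they share only the point $p$. Visibility of every $a_i$ from $p$ is then equivalent to $p$ lying in the kernel of $P$, so it suffices to prove that the angles under which $a_1,\ldots,a_m$ are seen from $p$ form a strictly monotone sequence, since this forces $C$ to be a single-valued polar graph from $p$ and hence every ray from $p$ meets $C$ at most once.

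Angular monotonicity amounts to the cross product
\[
  (a_i-p)\times(a_{i+1}-p) = (x_{i+1}-x_i)\bigl[(x_i-x_p)\,s_i - (y_i-y_p)\bigr]
\]
having a constant sign in $i$, where $s_i$ is the slope of the edge $(a_i,a_{i+1})$. Since $x_{i+1}>x_i$, I would split on the position of $a_i$ relative to the vertical line $x=x_p$; this reduces sign control to the inequalities $s_i>\slope(p,a_i)$ when $x_i>x_p$, $s_i<\slope(p,a_i)$ when $x_i<x_p$, and $y_i<y_p$ in the degenerate case (immediate from $p$ being above $C$). The hypothesis---positive edge slopes are bounded above by $\slope(a_1,p)$ and negative edge slopes are bounded below by $\slope(p,a_m)$---disposes of the easy sub-cases immediately, since, for instance, a positive $s_i$ with $x_i>x_p$ trivially exceeds the negative quantity $\slope(p,a_i)$.

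The remaining sub-cases---negative $s_i$ with $x_i>x_p$, and positive $s_i$ with $x_i<x_p$---require the auxiliary facts $\slope(p,a_i)\le\slope(p,a_m)$ for $x_i>x_p$ and $\slope(p,a_i)\ge\slope(a_1,p)$ for $x_i<x_p$; geometrically, both say that $a_i$ lies on or below each of the lines through $pa_1$ and $pa_m$. I expect this to be the main obstacle, because it is a global property of the chain rather than a per-edge bound. I would derive it directly from the non-crossing hypothesis: if some $a_i$ with $x_i>x_p$ lay strictly above the supporting line of $pa_m$, then the sub-chain from $a_i$ to $a_m$---which lives in the $x$-strip $[x_i,x_m]\subseteq[x_p,x_m]$ covered by the segment $pa_m$---would be forced to cross that segment, contradicting the hypothesis; a symmetric argument handles $a_1p$. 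Once these facts are plugged into the case analysis, the cross-product sign check goes through and the lemma follows.
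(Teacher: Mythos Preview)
The paper does not give its own proof of this lemma: it is quoted verbatim from Durocher and Mondal~\cite{durocher2014} and closed with a \qed, so there is nothing to compare against directly. Your approach---reduce to strict angular monotonicity of $a_1,\ldots,a_m$ about $p$ via the sign of $(a_i-p)\times(a_{i+1}-p)$---is a clean and standard way to establish star-shapedness, and the case split on the sign of $s_i$ and of $x_i-x_p$ is set up correctly; once all cross products are shown to be strictly positive, the chain is a single-valued polar graph from~$p$ and visibility of every $a_i$ follows.

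There is, however, a genuine (if small) gap in your derivation of the auxiliary fact. You claim that if some $a_i$ with $x_i>x_p$ lay strictly above the line through $p$ and $a_m$, then the sub-chain $a_i,\ldots,a_m$ ``would be forced to cross'' the segment $pa_m$. That is not true from the non-crossing hypothesis alone: the sub-chain may stay strictly above the line all the way until it meets it at the endpoint $a_m$, and meeting $pa_m$ at $a_m$ is explicitly permitted by the hypothesis. What actually rules this out is the \emph{slope} hypothesis: if $a_{j}$ is strictly above the line and $a_{j+1}$ is on or below it (such a $j<m$ exists since $a_m$ is on the line), then the edge $a_{j}a_{j+1}$ has slope strictly less than $\slope(p,a_m)$; since $\slope(p,a_m)<0$ whenever $x_m>x_p$, this edge has negative slope below $\slope(p,a_m)$, contradicting the assumption that all negative edge slopes exceed $\slope(p,a_m)$. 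The symmetric fix applies to the $a_1p$ side. With this correction your argument goes through; note in particular that for $1<i<m$ with $x_p<x_i<x_m$ you then get $a_i$ \emph{strictly} below the line (equality would place $a_i$ in the interior of $pa_m$, violating non-crossing), which is exactly the strictness you need for the cross product to be nonzero.
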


\begin{figure}[b]
	\centering
	\includegraphics[page=1]{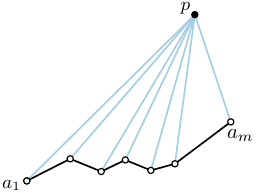}
	\caption{Illustrations of Lemma~\ref{lem:polychain}.}
	\label{fig:polychain}
\end{figure}

Note that the first condition trivially holds if all slopes of~$C$ are non-positive
and that the second condition trivially holds if all slopes of~$C$ are non-negative.

We will further need the following lemma which relies on Lemma~\ref{lem:polychain}.

\begin{lemma}\label{lem:polychain-move}
  Let $a_1,\ldots,a_m$ be a strictly $x$-monotone polygonal chain~$C$.
  Let~$p$ be a point below~$C$ that is connected to $a_1,\ldots,a_m$
  such that $\slope(a_1,p)\ge 0$ and $\slope(a_i,a_{i+1})>\slope(a_1,p)$
  for $i=1,\ldots,m-1$. Then the resulting drawing~$\Gamma$ is planar.
  Furthermore, assume that $p$ sees $a_1$ and $a_m$.
  If we move~$p$ to a point on the ray emanating
  from~$p$ with slope~$\slope(a_1,p)$, then the resulting drawing~$\Gamma'$
  preserves the embedding of~$\Gamma$ .
\end{lemma}
\begin{proof}
  We first prove that~$\Gamma$ is planar. To this end, we first mirror~$\Gamma$
  at the $x$-axis to obtain a drawing~$\Gamma^*$. Then~$a_1,\ldots,a_m$ is a strictly $x$-monotone polygonal
  chain~$C$ without positive slopes and~$p$ is a point above~$C$ such that
  all (negative) slope of~$C$ are smaller than~$\slope(a_1,p)$. Then by Lemma~\ref{lem:polychain}
  $\Gamma^*$ and, hence,~$\Gamma$ is planar.

  Now we prove that~$\Gamma'$ preserves the embedding of~$\Gamma$.
  To change the embedding, at least one triangle $(a_i,a_{i+1},p)$ has to change
  its orientation, that is, $p$ moves across the supporting line of
  some $(a_i,a_{i+1})$. Recall that for all $1\le i<m$ we have
  $\slope(a_i,a_{i+1})>\slope(a_1,p)=\slope(a_1,p')$ (see Fig.~\ref{fig:polychain-move}).
  Since we increase the distance from $a_1$ when moving from $p$ to $p'$ (and due to the slopes),
  we also increase the distance to $(a_i,a_{i+1})$. Thus,
  both~$p$ and~$p'$ lie in the same halfplane defined by the supporting line of $(a_i,a_{i+1})$.
  This proves the lemma.
\end{proof}

\begin{figure}[t]
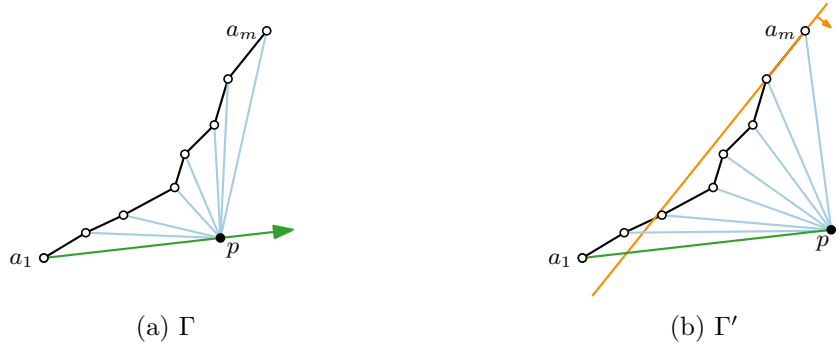

	\centering
	\begin{subfigure}[b]{.4\linewidth}
		\centering
		\includegraphics[page=2]{polychain}
		\caption{$\Gamma$}
	\end{subfigure}
	\hfil
	\begin{subfigure}[b]{.4\linewidth}
		\centering
		\includegraphics[page=3]{polychain}
		\caption{$\Gamma'$}
	\end{subfigure}
	\caption{Illustrations of Lemma~\ref{lem:polychain-move}.}
	\label{fig:polychain-move}
\end{figure}

\paragraph{Overview and notation.}
The main idea of the algorithm is as follows.
We inductively place the vertices according to the
canonical order~$\sigma=(v_1,\ldots,v_n)$ and refer to the step in which
vertex~$v_k$ is placed as \emph{step~$k$}. By the choice of the canonical
ordering, when we place vertex~$v_k$, its parent~$v_\ell$ in~$T_1$ has already
been placed. If~$v_k$ is the first child of~$v_\ell$ in~$T_1$, then we place~$v_k$
such that the edge~$(v_\ell,v_k)$ has the same slope as the edge from~$v_\ell$
to its parent in~$T_1$, so that both edges are drawn with one segment. This way,
we obtain a drawing of~$G$ in which the
edges of~$T_1$ are drawn with exactly one segment for every leaf of~$T_1$.

In order to be able to do this placement, we have to maintain a series of invariants
which rely on additional notation; see Fig.~\ref{fig:definitions-3tree}.
For each vertex~$v_i$, the \emph{1-out-slope} $\outl_1(i)$
is the slope of its outgoing 1-edge, the \emph{2-out-slope} $\outl_2(i)$
is the slope of its outgoing 2-edge, and the \emph{in-slope} $\inl(i)$
is the highest slope of the incoming 1-edges in the current drawing. Further, we
denote by $\parent_1(i)$ the 1-parent of~$v_i$ and by $\parent_2(i)$ the
2-parent of~$v_i$. For two vertices~$v_i,v_j$, we
denote by~$\lca(i,j)$ the lowest common ancestor of~$v_i$ and~$v_j$ in~$T_1$.
For an edge~$(v_i,v_j)$ we call the closed region bounded by~$(v_i,v_j)$, the path
$(v_i,\ldots,\lca(i,j))$, and the path $(v_j,\ldots,\lca(i,j))$ the
\emph{domain} $\dom(i,j)$ of $(v_i, v_j)$. For each step~$k$, we denote
by~$\lambda_k$ the number of leafs in the currently drawn subtree of~$T_1$, by~$s_k$ the
number of segments that are used to draw the current subtree of~$T_1$, and by $\eta_k$ the highest
slope of the 1-edges in the current drawing. We denote by~$C_k^\rightarrow$ the
part of the contour~$C_k$ between~$v_k$ and~$v_2$.

Before describing the invariants and the algorithm in detail, we first show the following
property of~$C_k^\rightarrow$.

\begin{lemma}\label{lem:contourright}
  The edges on $C_k^\rightarrow$ are exactly the path from~$v_k$ to~$v_2$ in~$T_2$,
  and $\parent_2(k)\in C_{k-1}^\rightarrow$.
\end{lemma}
\begin{proof}
  We proof the lemma by induction. For~$k=3$, we have
  that~$C_k^\rightarrow=(v_3,v_2)$ with~$\parent_2(3)=v_2$, so the lemma
  holds. For~$k>3$, recall that we chose the canonical order induced
  by~$T_2$ in clockwise pre-order. Hence, either~$v_{k-1}$ is the parent
  of~$v_k$ in~$T_2$, or they have a common ancestor~$v_\ell$ with~$\ell<k-1$
  in~$T_2$. In the former case,~$\parent_2(k)=v_{k-1}\in C_{k-1}^\rightarrow$
  and $C_k^\rightarrow=v_k\circ C_{k-1}^\rightarrow$, so the lemma holds by
  induction. In the latter case, we in fact have $\parent_2(k)=v_\ell$.
  By induction,~$C_{k-1}^\rightarrow=(v_{k-1},\ldots,v_\ell,\ldots,v_2)$,
  so~$\parent_2(k)\in C_{k-1}^\rightarrow$ and
  $C_k^\rightarrow=(v_k,v_\ell,\ldots,v_2)$.
\end{proof}

\begin{figure}[t]
  \centering
  \includegraphics{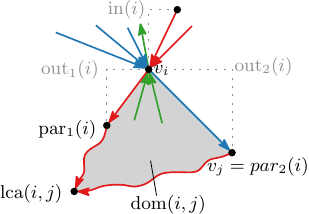}
  \caption{Definitions for the drawing algorithm for planar 3-trees.}
  \label{fig:definitions-3tree}
\end{figure}

\paragraph{Invariants.}
After each step~$k\ge 3$, we maintain the following invariants.

\begin{enumerate}[label=(I\arabic*)]
	\item\label{inv-chain} The contour~$C_k$ is a strictly $x$-monotone polygonal
    chain; the $x$-coordinates along~$C_k^\rightarrow$ increase by exactly~1 per vertex.
  \item\label{inv-segments} The 1-edges are drawn with $s_k=\lambda_k$ segments
    in total with integer slopes between~$1$ and~$\eta_k\le\lambda_k$.
  \item\label{inv-domain} For each~$(v_i,v_j)\in C_k^\rightarrow$
    and for each 1-edge~$e\neq(\parent_1(j),v_j)$ in $\dom(i,j)$ it holds that
    $\slope(e)>\outl_1(j)$.
  \item\label{inv-contour} For each~$v_i\in C_k^\rightarrow$, $\outl_1(i)>\outl_2(i)$.
  \item\label{inv-planar} The current drawing is crossing-free and
    for each~$(v_i,v_j)\in T_n$, we have $\slope(v_i,v_j)>\outl_1(j)$.
  \item\label{inv-area} Vertex~$v_1$ is placed at coordinate~$(0,0)$,~$v_2$ is
    placed at coordinate~$(k-1,0)$, and every vertex lies inside the
    rectangle $(0,0)\times(k-1,(k-1)\lambda_k)$.
\end{enumerate}

\paragraph{The algorithm.}
The algorithm starts with placing~$v_1$ at~$(0,0)$, $v_2$ at~$(2,0)$, and~$v_3$
at~$(1,1)$. Obviously, all invariants~\ref{inv-chain}--\ref{inv-area} hold.
In step~$k>3$, the algorithm proceeds in two steps. Recall that~$v_k$ is a
neighbor of all vertices on the contour between~$v_l=\parent_1(k)$ and~$v_r=\parent_2(k)$.

\subparagraph{Step 1: Insertion.} In the insertion step,~$v_k$ is placed with the same
$x$-coordinate as~$v_r$.  We distinguish between three cases to obtain the
$y$-coordinate of~$v_k$; see Fig.~\ref{fig:opinsert} for an illustration.
\begin{enumerate}[label=(\roman*)]
  \item\label{opinsert-1} If no incoming 1-edge of~$v_l$ has been drawn
    yet, then we draw the edge~$(v_l,v_k)$ with
    slope~$\outl_1(l)$.
  \item\label{opinsert-2} If~$v_l$ already has an incoming 1-edge
    and~$v_l$ and~$v_r$ are the only
    neighbors of~$v_k$ in the current drawing, then we draw the
    edge~$(v_l,v_k)$ with slope $\inl(l)+1$.
  \item\label{opinsert-3} Otherwise, we draw the edge~$(v_l,v_k)$ with
    slope~$\eta_{k-1}+1$.
\end{enumerate}
Note that this does not maintain invariant~\ref{inv-chain}.

\begin{figure}[t]
  \centering
  \begin{subfigure}[t]{.3\textwidth}
    \centering
    \includegraphics[page=1]{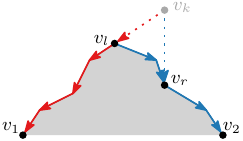}
    \caption{Case~\ref{opinsert-1}}
    \label{fig:opinsert-1}
  \end{subfigure}
  \hfill
  \begin{subfigure}[t]{.3\textwidth}
    \centering
    \includegraphics[page=2]{opinsert}
    \caption{Case~\ref{opinsert-2}}
    \label{fig:opinsert-2}
  \end{subfigure}
  \hfill
  \begin{subfigure}[t]{.3\textwidth}
    \centering
    \includegraphics[page=3]{opinsert}
    \caption{Case~\ref{opinsert-3}}
    \label{fig:opinsert-3}
  \end{subfigure}
  \caption{Inserting vertex~$v_k$ while maintaining invariant~\ref{inv-planar}.}
  \label{fig:opinsert}
\end{figure}

\subparagraph{Step 2: Shifting.} In the shifting step, the vertices
between~$v_r$ and~$v_2$ on the contour~$C_k$ have to be shifted to the
right without increasing the number of segments~$s_k$ used to draw~$T_1$. To
this end, we iteratively extend the outgoing 1-edge of these vertices, starting
with~$v_r$, to increase their $x$-coordinates all by~$1$; see
Fig.~\ref{fig:opdiagshift}. This procedure places the vertices on the grid since
the slopes of the extended edges are all integer by invariant~\ref{inv-segments}.

\begin{figure}[b]
  \centering
  \includegraphics[page=1]{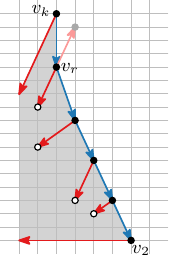}
  \hfill
  \includegraphics[page=2]{opdiagshift}
  \hfill
  \includegraphics[page=3]{opdiagshift}
  \hfill
  \includegraphics[page=4]{opdiagshift}
  \caption{Shifting $v_r,\ldots,v_2$ along their outgoing 1-edge.}
  \label{fig:opdiagshift}
\end{figure}

In the following, we will show that every planar 3-tree admits a straight-line
drawing that uses at most $(8n-17)/3$ segments on an $O(n)\times O(n^2)$ grid,
and that this drawing can be computed in $O(n^2)$ time.
We have to show that, after each step~$k>3$, all
invariants~\ref{inv-chain}--\ref{inv-area} hold. We denote by $x(v)$ the $x$-coordinate
of the vertex~$v$, and similarly, by $y(v)$ the $y$-coordinate of~$v$.

\paragraph{Invariant~\ref{inv-chain}.}
The contour in step~$k$ is $C_k=(v_1,\ldots, v_l, v_k, v_r,\ldots,v_2)$.
By induction, $(v_1,\ldots,v_l,v_r,\ldots,v_2)$ is strictly $x$-monotone.
We place~$v_k$ above~$v_r$ and move all vertices from~$v_r$ to~$v_2$
to the right by one unit, so~$x(v_l)<x(v_k)=x(v_r)-1$ and~$C_k$ is strictly
$x$-monotone. By Lemma~\ref{lem:contourright}, $v_r$ lies on~$C_{k-1}^\rightarrow$,
so the $x$-coordinates along~$(v_r,\ldots,v_2)$ increase by~1 per vertex;
since~$x(v_r)=x(v_k)$, the same holds for~$C_k^\rightarrow$.

\paragraph{Invariant~\ref{inv-segments}.} In case~\ref{opinsert-1} of the insertion step,
$v_l$ changes from a leaf to a non-leaf, while~$v_k$ is added as a
leaf. Since~$(v_l,v_k)$ is drawn with the same slope as the outgoing
1-edge of~$v_l$, we have $s_k=s_{k-1}=\lambda_{k-1}=\lambda_k\ge \eta_k$.
In cases~\ref{opinsert-2} and~\ref{opinsert-3} of the insertion step, $v_l$ was not a leaf
before and a new integer slope is used to draw~$(v_l,v_k)$, so we have
$s_k=s_{k-1}+1=\lambda_{k-1}+1=\lambda_k$; since the maximum slope increases
by at most one, we have $\eta_k\le\eta_{k-1}+1\le\lambda_k$. For the shift step,
as the vertices on~$C_k^\rightarrow$ have no incoming 1-edges, and the slopes of their
outgoing 1-edges do not change, the number of segments remains the same.

\paragraph{Invariant~\ref{inv-domain}.}
We have to only address the insertion step, since the shifting step
does not affect the relevant slopes.
Since the slopes of the 1-edges do not change,
and since the edges on $(v_r,\ldots,v_2)$ lie on~$C_{k-1}^\rightarrow$
by Lemma~\ref{lem:contourright}, the invariant holds for edges on $(v_r,\ldots,v_2)$ by induction;
hence, it suffices to show the invariant for $(v_k,v_r)$.
By Lemma~\ref{lem:contour}(a), the edge $(v_l,v_r)$ exists in~$T_1$ or~$T_2$.
We distinguish between two cases.

Case 1: $(v_r,v_l)\in T_1$; see Fig.~\ref{fig:domain-1}. It immediately follows that
$\lca(k,r)=v_l=\parent_1(k)$. Hence, the domain $\dom(k,r)$ is bounded  by the
triangle~$(v_l,v_r,v_k)$, which is a planar 3-tree by Lemma~\ref{lem:contour}(b)
with~$v_l$ as the root of~$T_1$. By construction, the
edge~$(v_r,v_l)$ has the smallest slope of all incoming 1-edges of~$v_l$ within
the domain. In particular, by construction, the first incoming 1-edge of each vertex
is assigned the same slope as the outgoing 1-edge, while all other incoming
1-edges are assigned higher slopes. Thus, all 1-edges in the domain must have
higher slopes than $(v_r,v_l)$.

  \begin{figure}[t]
    \centering
    \begin{subfigure}[t]{.48\linewidth}
      \centering
      \includegraphics[page=1]{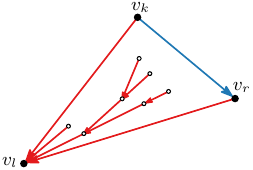}
      \caption{Case $(v_r,v_l)\in T_1$}
      \label{fig:domain-1}
    \end{subfigure}
    \hfill
    \begin{subfigure}[t]{.48\linewidth}
      \centering
      \includegraphics[page=2]{domain}
      \caption{Case $(v_l,v_r)\in T_2$}
      \label{fig:domain-2}
    \end{subfigure}
    \caption{The domain $\dom(k,r)$ as in invariant~\ref{inv-domain}.}
    \label{fig:domain}
  \end{figure}

Case 2: $(v_l,v_r)\in T_2$; see Fig.~\ref{fig:domain-2}. Consider the
unique path from $v_k$ to $v_1$ in~$T_1$. Denote the vertices on this path
by $t_0,t_1,t_2,\ldots,t_q$ with $t_0=v_k$, $t_1=v_l$, and $t_q=v_1$.
By Lemma~\ref{lem:contour}, $t_1=v_l$ is connected to~$v_r$ and~$(t_1,v_r)\in T_1$
or $(t_1,v_r)\in T_2$. If~$(t_1,v_r)\in T_2$, then consider the drawing of~$G_l$
after step~$l$ of the algorithm. In this step, $t_1=v_l$ was placed,~$v_r$
is its $2$-parent, and~$t_2$ is its $1$-parent. Hence, by Lemma~\ref{lem:contour},
$t_2$ is connected to~$v_r$ and~$(t_2,v_r)\in T_1$ or $(t_2,v_r)\in T_2$.
It follows by induction that, as long as~$(t_i,v_r)\in T_2$, then
$(t_{i+1},v_r)$ exists.

Let~$t_s$ be the first vertex on the path $t_0,t_1,\ldots,t_q$
such that~$(t_s,v_r)$ exists but is not a 2-edge. Since~$t_q=v_1$ but~$v_1$ has no $2$-parent,
this vertex exists; since
$(t_0=v_k,v_r),(t_1=v_l,v_r)\in T_2$, we have~$s>1$. By choice of~$t_s$, we have
that~$(t_{s-1},t_s)\in T_1$ and~$(t_{s-1},v_r)\in T_2$; hence, by
Lemma~\ref{lem:contour}(a), the edge~$(v_r,t_s)$ exists and since $(t_s,v_r)\not\in T_2$
it has to be a 1-edge.
Hence, we have~$\parent_1(r)=t_s=\lca(k,r)$. The domain~$\dom(k,r)$ is bounded
by the edges~$(v_k,v_r)$,~$(v_r,t_s)$, and the path~$(t_0:=v_k,t_1,\ldots,t_s)$.
By Lemma~\ref{lem:contour}(b), the domain can be divided into~$s$ planar 3-trees
$(t_i,v_r,t_{i-1})$, $1\le i\le s$. For the triangle~$(t_s,v_r,t_{s-1})$, the
argument from Case 1 can be directly applied; for~$i<s$, we can use the same
argument to show that no 1-edge has a higher slope than~$(t_i,t_{i+1})$,
which proves the invariant.

\paragraph{Invariant~\ref{inv-contour}.} Let~$v_i\in C_k^\rightarrow$. First, consider the
case that~$v_i=v_k$; see Fig.~\ref{fig:contourslopes-1}.
In the insertion step,~$v_k$ is placed at the same
$x$-coordinate as~$v_r$. In the shifting step,~$v_r$ is moved to the right by
one column and upwards by $\outl_1(r)$ rows; hence, we have~$y(v_r)<y(v_k)+\outl_1(r)$.
Since $\dom(k,r)$ contains~$v_l=\parent_1(k)$, it follows from invariant~\ref{inv-domain}
that $\outl_2(k)=y(v_r)-y(v_k)<\outl_1(r)<\outl_1(k)$.

  \begin{figure}[t]
    \centering
    \begin{subfigure}[t]{.46\linewidth}
      \centering
      \includegraphics[page=1]{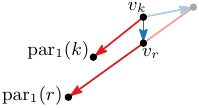}
      \caption{$i=k$ and $v_r=\parent_2(k)$}
      \label{fig:contourslopes-1}
    \end{subfigure}
    \hfil
    \begin{subfigure}[t]{.46\linewidth}
      \centering
      \includegraphics[page=2]{contourslopes}
      \caption{$i\neq k$ and $v_j=\parent_2(i)$}
      \label{fig:contourslopes-2}
    \end{subfigure}
    \caption{Shifting $\parent_2(i)$ while maintaining invariant~\ref{inv-contour}.}
    \label{fig:contourslopes}
  \end{figure}

Now, consider the case that~$v_2\neq v_i\neq v_k$; see
Fig.~\ref{fig:contourslopes-2}. Let~$v_j=\parent_2(i)$.
By Lemma~\ref{lem:contourright}, $(v_i,v_j)\in C_{k-1}^\rightarrow$, so
we had $\outl_1(i)>\outl_2(i)$ before the shifting step of the
algorithm. In the shifting step, $\outl_1(i)$ does not change. However,~$v_i$
and~$v_j$ are shifted to the right by one column and upwards by~$\outl_1(i)$
rows and~$\outl_1(j)$ rows, respectively; hence,~$\outl_2(i)$ increases
by~$\outl_1(j)-\outl_1(i)$. Since~$\dom(i,j)$ contains~$v_l=\parent_1(i)$, it
follows from invariant~\ref{inv-domain} that $\outl_1(j)<\outl_1(i)$. This
implies that~$\outl_2(i)$ becomes smaller by the shifting step,
so~$\outl_1(i)>\outl_2(i)$ is maintained.

\paragraph{Invariant~\ref{inv-planar}.}
First, consider the drawing after the insertion step, but before the shifting
step. By induction, the drawing was crossing-free before this step.
Since no existing edge is modified, each crossing has to involve an edge
of~$v_k$. We will use Lemma~\ref{lem:polychain} to show that~$v_k$ sees
all its neighbors.
By invariant~\ref{inv-chain}, the neighbors of~$v_k$ lie on a strictly
$x$-monotone polygonal chain. Because of~$\slope(v_k,v_r)=-\infty$, it suffices
to show that the edges between~$v_l$ and~$v_r$ on the contour have smaller slope
than~$\slope(v_l,v_k)$.

Consider case~\ref{opinsert-1} of the insertion step:~$v_l$ has no
incoming 1-edge; see Fig.~\ref{fig:opinsert-1}. Then, $(v_l,v_r)\in C_{k-1}$; otherwise, we get a
contradiction from Lemma~\ref{lem:contour}(c). Hence,~$v_l$ and~$v_r$ are
the only neighbors of~$v_k$ and we have to show only
that~$\slope(v_k,v_l)>\slope(v_l,v_r)$. By construction and
invariant~\ref{inv-contour}, we have
$\slope(v_k,v_l)=\outl_1(l)>\outl_2(l)=\slope(v_l,v_r)$.

Consider case~\ref{opinsert-2} of the insertion step: $v_l$ already has an
incoming 1-edge and $v_l$ and~$v_r$ are the only neighbors of~$v_k$; see Fig.~\ref{fig:opinsert-2}.
In particular, that means that~$(v_r,v_l)\in T_1$ and, by construction,
$\slope(v_k,v_l)=\inl(l)+1>\inl(l)\ge\slope(v_r,v_l)$.

Consider case~\ref{opinsert-3} of the insertion step: $v_l$ already has an
incoming 1-edge and $(v_l,v_r)\notin C_{k-1}$; see Fig.~\ref{fig:opinsert-3}.
In this case,~$(v_k,v_l)$ is
drawn with $\slope(v_k,v_l)=\eta_{k-1}+1$, so every 1-edge on the contour
between~$v_l$ and~$v_r$ has lower slope than~$(v_k,v_l)$. By
invariant~\ref{inv-segments} and~\ref{inv-contour}, the 2-edges on the contour between~$v_l$ and~$v_r$
also cannot have higher slope than $\eta_{k-1}$.

Hence, we can use Lemma~\ref{lem:polychain} in each case of the insertion step
to show that the drawing remains planar. The second part of the invariant,
for each~$(v_i,v_j)\in T_n$, $\slope(v_i,v_j)>\outl_1(j)$, holds for each
vertex but~$v_k$ by induction. Since the edges of~$v_k$ are planar
and~$v_k$ is placed vertically above~$v_r=\parent_2(k)$, all
edges~$(v_i,v_k)\in T_n$ have positive slope
with~$x(v_i)>x(v_l)$, so $\slope(v_i,v_k)>\outl_1(k)$.

Now, consider the drawing after the shifting step. We shift the
vertices between~$v_r$ and~$v_2$ one by one on the contour~$C_k^\rightarrow$ and show that,
after each step, the invariant holds. Let~$v_t$ be the vertex next to shift,
let~$v_s$ be its predecessor and let~$v_u=\parent_2(t)$ be its successor on the contour.
Since either~$v_s=v_k$ or~$v_s$ has been shifted in the previous step, the
edge~$(v_s,v_t)$ is drawn vertically and the edge~$(v_t,v_u)$ has
slope~$\outl_2(t)$.
Since~$(v_t,\parent_1(t))$ lies on the boundary of both domains~$\dom(s,t)$
and~$\dom(t,u)$, the union of these domains contains all edges incident
to~$v_t$ and all $n$-edges of~$v_t$ lie in~$\dom(t,u)$. We will now show
that both domains remain planar and for all $(v_i,v_t)\in T_n$,
$\slope(v_i,v_t)>\outl_1(t)$.

\begin{figure}[t]
  \centering
  \begin{subfigure}[b]{.45\textwidth}
    \centering
    \includegraphics[page=1]{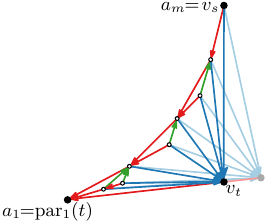}
    \caption{Maintaining the invariant in~$\dom(s,t)$}
    \label{fig:planar-diagshift-1}
  \end{subfigure}
  \hfil
  \begin{subfigure}[b]{.45\textwidth}
    \centering
    \includegraphics[page=2]{planar-diagshift}
    \caption{Maintaining the invariant in~$\dom(t,u)$}
    \label{fig:planar-diagshift-2}
  \end{subfigure}
  \caption{Shifting vertex~$v_t$ while maintaining invariant~\ref{inv-planar}.}
  \label{fig:planar-diagshift}
\end{figure}

Consider the domain~$\dom(s,t)$; see Fig.~\ref{fig:planar-diagshift-1}. By the Schnyder realizer properties, all
edges incident to~$v_t$ (except~$(v_t,\parent_1(t))$) are 2-edges.
Let~$\parent_1(t)=a_1,\ldots,a_m=v_s$ be the path through the neighbors
of~$v_t$. All edges on this path are 1-edges or $n$-edges (otherwise~$T_2$
would not be a tree). By invariant~\ref{inv-domain}, all 1-edges on this
path have slope higher than~$\outl_1(t)$. By induction and by
invariant~\ref{inv-domain}, for each $n$-edge
$(a_{i-1},a_i)$ on this path, we
have~$\slope(a_{i-1},a_i)>\outl_1(i)>\outl_1(t)$. As $\outl_1(t)>0$,
the path~$a_1,\ldots,a_m$ is an $x$-monotone chain without negative slopes.
Hence, we can apply Lemma~\ref{lem:polychain-move} to prove that the new
position of~$v_t$ preserves the embedding, so the drawing remains planar.

Now, consider the domain~$\dom(t,u)$; see Fig.~\ref{fig:planar-diagshift-2}. By the Schnyder realizer properties, all
edges incident to~$v_t$ (except~$(v_t,\parent_2(t))$) are $n$-edges.
Let~$\parent_1(t)=a_1,\ldots,a_m=v_u$ be the path through the neighbors
of~$v_t$. All edges on this path are 1-edges or 2-edges (otherwise~$T_n$
would not be a tree). If~$m=2$, then there is no $n$-edge in~$\dom(t,u)$.
By invariant~\ref{inv-domain}, we have~$\outl_1(t)>\slope(a_1,a_2)=\outl_1(u)$
and  we can move~$v_t$ in
direction~$\overrightarrow{\parent_1(t) v_t}$ without changing the embedding.
If~$m>2$, we look back on step~$t$ of the algorithm. Observe
that~$a_1,\ldots,a_m$ were the neighbors of~$v_t$ on the contour~$C_{t-1}$.
Since case~\ref{opinsert-1} and~\ref{opinsert-2} imply~$m=2$,~$v_t$ had to be
inserted with case~\ref{opinsert-3}. By construction, the
edge~$(\parent_1(t),v_t)$ was inserted with a higher slope than all 1-edges and,
by invariant~\ref{inv-domain}, also all 2-edges on the contour. Further,
as~$a_1,\ldots,a_{m-1}$ were removed from the contour and the algorithm only
changes the drawing of edges incident to a vertex on the contour, the edges~$(a_i,a_{i+1})$
for $1\le i\le m-2$ were not changed afterward. Thus, we still have
$\outl_1(t)>\slope(a_i,a_{i+1}),1\le i\le m-2$. Additionally, by
invariant~\ref{inv-domain}, also $\outl_1(t)>\slope(a_{m-1},a_m)$ holds, so by
Lemma~\ref{lem:polychain} we can place~$v_t$ above~$v_u$ with slope~$\outl_1(t)$
without crossings (since we obtain~$\slope(v_t,v_u)=-\infty$).
The movement of~$v_t$ to the right and the planarity maintain
that~$\slope(a_i,v_t)>\outl_1(t),2\le i\le m-1$.
This establishes invariant~\ref{inv-planar}.

\paragraph{Invariant~\ref{inv-area}.}
Since~$v_1$ is never moved, it remains at~$(0,0)$. Vertex~$v_2$ is moved $k-3$ times to
the right by one column, so it is located at~$((k-3)+2,0)=(k-1,0)$.
As~$v_k$ is not placed to the right of~$v_2$ and no vertex is moved to the
right by more than one column, the maximum $x$-coordinate of all vertices
is~$k-1$. The contour consists of only 1-edges and 2-edges; by
invariants~\ref{inv-contour} and~\ref{inv-segments}, their maximum slope
is~$\lambda_k$. Hence, the maximum $y$-coordinate of all vertices is at most~$(k-1)\lambda_k$.

This proves the correctness of all invariants. We will now analyze the number
of segments used by the algorithm. By Invariant~\ref{inv-segments}, the 1-edges
are drawn with~$s_n=\lambda_n$ slopes, where~$\lambda_n$ is equals to the
number of leaves in~$T_1$. Recall that the number of leaves in the Schnyder
realizer is at most~$2n-5$~\cite{bonichon2002} and that we chose~$T_1$ as the
canonical ordering tree with the fewest leaves, so we
have~$\lambda_n\le(2n-5)/3$. The trees~$T_2$
and~$T_n$ contain~$n-1$ vertices, so they are both drawn with at most~$n-2$
segments. Hence, the total number of segments is at
most~$(2n-5)/3+(n-2)+(n-2)=(8n-17)/3$.

It remains to show the running time of the algorithm.
Both the Schnyder realizer and the canonical order can be computed in
linear time~\cite{schnyder1990,dpp-hdpgg-C90}. The insertion step can
be handled in constant time by storing at each vertex the highest slope
of its incoming 1-edges (if it exists) and updating the value~$\eta_k$
at each step. In the shifting step, we have to move all vertices on the
contour between the placed vertex and~$v_2$ by stretching their outgoing
1-edges. Since there are at most~$n-2$ vertices on this part of the
contour and we have~$n-2$ steps, this takes~$O(n^2)$ time in total.
It is not clear whether we can use an approach similar to the one by
Chrobak and Payne~\cite{cp-ltadp-ipl95} in order to reduce the running
time of the shifting step as the vertices have to be moved in two
directions instead of only one and since we have to know the exact
coordinates and slopes at every step.

This proves the following theorem, which is the main result of this section.
See Figure~\ref{fig:3tree-example} for an example.

\begin{figure}
  \centering
  \includegraphics[page=5]{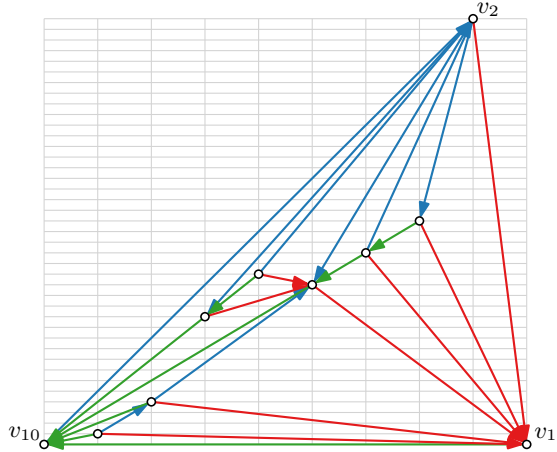}
  \caption{The drawing produced by our algorithm for the planar 3-tree depicted
  in Figure~\ref{fig:planar3tree}.}
  \label{fig:3tree-example}
\end{figure}

\begin{theorem}\label{thm:planar3tree}
  Every planar 3-tree admits a straight-line
  drawing that uses at most $(8n-17)/3$ segments on an $O(n)\times O(n^2)$ grid. This drawing can be computed in $O(n^2)$ time.
\end{theorem}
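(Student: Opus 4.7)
My plan is to prove Theorem~\ref{thm:planar3tree} in three parts: (A)~a structural induction on $k$ showing that invariants~\ref{inv-chain}--\ref{inv-area} are preserved by every insertion and shifting step; (B)~a segment count derived from those invariants; and (C)~a grid bound read off directly from~\ref{inv-area}.

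For part~(A), the base case $k=3$ is immediate from the explicit placement of $v_1,v_2,v_3$. For the inductive step I would verify each invariant separately for all three insertion cases~\ref{opinsert-1}--\ref{opinsert-3} and then for the shifting step. Invariant~\ref{inv-chain} is maintained because every vertex between $v_r$ and $v_2$ is translated by exactly one unit to the right along a previously drawn, integer-slope 1-edge, using~\ref{inv-segments}. Invariant~\ref{inv-segments} itself is maintained because case~\ref{opinsert-1} reuses $\outl_1(l)$ and extends an existing segment without creating a new leaf, while cases~\ref{opinsert-2} and~\ref{opinsert-3} each introduce one fresh slope and one new segment, matching the creation of a new leaf in the drawn portion of $T_1$. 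Invariants~\ref{inv-domain} and~\ref{inv-contour} follow from the chosen slope being strictly above all competing slopes (either $\inl(l)$ in case~\ref{opinsert-2} or $\eta_{k-1}$ in case~\ref{opinsert-3}), and invariant~\ref{inv-area} follows because the maximum $y$-coordinate in column $k-1$ is at most $(k-1)\eta_k\le(k-1)\lambda_k$. The crucial invariant~\ref{inv-planar} is handled with Lemma~\ref{lem:polychain} applied to the subchain of $C_{k-1}$ between $v_l=\parent_1(k)$ and $v_r=\parent_2(k)$, with $v_k$ in the role of the point $p$: Lemma~\ref{lem:contour}(c) identifies this subchain, invariant~\ref{inv-domain} bounds the positive slopes inside $\dom(l,r)$ below $\outl_1(l)=\slope(v_l,v_k)$, and invariant~\ref{inv-contour} supplies the matching negative-slope bound, so every subchain vertex is visible from $v_k$ and the new edges are crossing-free. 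The $n$-edge slope part of~\ref{inv-planar} is preserved because in cases~\ref{opinsert-2} and~\ref{opinsert-3} the new slope of the incident 1-edge is strictly smaller than the slopes of the incoming $n$-edges at $v_k$ by construction.

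For part~(B), invariant~\ref{inv-segments} already yields exactly $\lambda_n$ segments for the edges of $T_1$. Because $T_1$ was chosen to have the fewest leaves among the three canonical ordering trees of the (unique) Schnyder realizer of the planar 3-tree, and Bonichon et al.~\cite{bonichon2002} bound the total number of leaves across the three trees of a minimal realizer by $2n-5-\Delta_0\le 2n-5$, averaging gives $\lambda_n\le\lfloor(2n-5)/3\rfloor$. The edges of $T_2$ and $T_n$ are drawn without exploiting any collinearity and contribute at most $2(n-2)$ further segments, for a grand total of at most $(2n-5)/3+2(n-2)=(8n-17)/3$. For part~(C), invariant~\ref{inv-area} places all vertices inside the rectangle $(0,0)\times(n-1,(n-1)\lambda_n)$; since $\lambda_n=O(n)$, this is an $O(n)\times O(n^2)$ grid.

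The main obstacle will be the slope bookkeeping needed to verify~\ref{inv-planar} across all three insertion cases: I need to turn invariants~\ref{inv-domain} and~\ref{inv-contour} into exactly the positive- and negative-slope hypotheses demanded by Lemma~\ref{lem:polychain}, which requires a careful case analysis on where the competing 1-edges sit inside $\dom(l,r)$ and a separate argument that the new 1-edge of $v_k$ does not cross any 2- or $n$-edges on the contour. A second, subtler challenge is the shifting step: because it simultaneously extends all outgoing 1-edges of $v_r,\ldots,v_2$, I must verify that the domain inclusions of~\ref{inv-domain} are preserved under this parallel translation. I expect this to follow from the fact that every contour vertex moves by the same horizontal offset, so relative slope relations inside each $\dom(i,j)$ are unchanged, but writing this out cleanly is where the bulk of the work will lie.
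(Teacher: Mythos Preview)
Your three-part plan matches the paper's structure, and parts~(B) and~(C) are right. But part~(A) has a concrete error and a genuine gap in the treatment of~\ref{inv-planar}.

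The error is the claim that $\slope(v_l,v_k)=\outl_1(l)$. This holds only in case~\ref{opinsert-1}; in case~\ref{opinsert-2} the edge $(v_l,v_k)$ is drawn with slope $\inl(l)+1$, and in case~\ref{opinsert-3} with slope $\eta_{k-1}+1$. Consequently your appeal to~\ref{inv-domain} to bound the positive slopes below $\outl_1(l)$ does not establish the hypothesis of Lemma~\ref{lem:polychain} in cases~\ref{opinsert-2} and~\ref{opinsert-3}. The paper handles the three cases by three different arguments: case~\ref{opinsert-1} uses~\ref{inv-contour} directly (there are only two neighbors), case~\ref{opinsert-2} uses that $(v_r,v_l)\in T_1$ and $\inl(l)+1>\slope(v_r,v_l)$, and case~\ref{opinsert-3} uses that $\eta_{k-1}+1$ exceeds every existing $1$-edge slope and hence, via~\ref{inv-contour}, every $2$-edge slope on the relevant subchain. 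Your sketch of~\ref{inv-domain} and~\ref{inv-contour} is also too thin: the paper's proof of~\ref{inv-domain} for the new edge $(v_k,v_r)$ splits on whether $(v_r,v_l)\in T_1$ or $(v_l,v_r)\in T_2$ (Lemma~\ref{lem:contour}(a)) and in the second case traces a path in $T_1$ down to $\lca(k,r)$; and~\ref{inv-contour} needs the observation that $\outl_2(i)$ actually \emph{decreases} during shifting because $\outl_1(\parent_2(i))<\outl_1(i)$.

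The gap is in the shifting step. Your intuition that the common horizontal offset preserves relative slopes is the argument for~\ref{inv-domain} (and indeed the paper dismisses~\ref{inv-domain} for shifting in one line, since $1$-edge slopes are invariant). But that is not the hard part. For~\ref{inv-planar} the vertices on $C_k^\rightarrow$ move by \emph{different} vertical amounts (each by its own $\outl_1$), so the slopes of their incident $2$- and $n$-edges change. The paper therefore shifts the contour vertices one at a time and, for each shifted vertex~$v_t$, applies Lemma~\ref{lem:polychain} separately inside the two domains $\dom(s,t)$ and $\dom(t,u)$ that together contain all edges of~$v_t$. This requires showing that the polygonal chains of neighbors in each domain satisfy the slope hypotheses, which in turn leans on the $n$-edge slope clause of~\ref{inv-planar} and on revisiting which insertion case was used when~$v_t$ was originally placed. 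Your ``same horizontal offset'' argument does not supply any of this.
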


\section{Maximal outerplanar graphs with segments on the grid}\label{sect:outerplanar}

A graph is \emph{outerplanar} if it can be embedded in the plane with all vertices
on one face (called outerface), and it is \emph{maximal outerplanar} if no edge can be added
while preserving outerplanarity. This implies that all interior faces
of a maximal outerplanar graph are triangles.
Outerplanar graphs have
degeneracy~2~\cite{lick1970}, that is, every induced subgraph of an outerplanar graph
has a vertex with degree at most two. Thus, we find in every maximal outerplanar graph
a vertex of degree~2 whose removal (taking away one triangle) results in another
maximal outerplanar graph.
By this, we gain a deconstruction
order (also known as ear-decomposition) that stops with a triangle.
Let~$G=(V,E)$ be a maximal outerplanar graph and
let~$\sigma=(v_1,\ldots,v_n)$ be the reversed deconstruction order.

\begin{lemma}\label{lem:outertrees}
  The edges of~$G$ can be partitioned into two trees~$T_1$ and~$T_2$. Moreover,
  we can turn~$G$ into a planar 3-tree by adding a vertex and edges in the outerface.
  The additional edges form a tree~$T_n$. The three trees~$T_1$, $T_2$ and~$T_n$
  induce a Schnyder realizer.
\end{lemma}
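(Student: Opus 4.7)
The plan is to construct the extended graph $G^+$ by adding a new apex vertex $w$ in the outer face of $G$ and connecting it to every vertex of $G$; in the Schnyder-realizer language for $G^+$, this $w$ is the root of $T_n$. I first verify that $G^+$ is maximal planar: every interior face of $G$ is already a triangle, and the star at $w$ subdivides the outer face of $G$ into $n$ triangular faces, one per outer edge of $G$. Choosing two consecutive vertices $v_1,v_2$ on the outer cycle of $G$ as the other two Schnyder roots, the outer face of $G^+$ becomes the triangle $(v_1,v_2,w)$, and the star $T_n$ of all $n$ edges incident to $w$ is visibly a tree.

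Next I show by induction on $n=|V(G)|$ that $G^+$ is a planar 3-tree, using the ear-deconstruction of $G$. The base case $G=K_3$ gives $G^+=K_4$. For the inductive step, let $u$ be any degree-2 ear vertex of $G$, with outer neighbors $a,b$. Because the face $(a,b,u)$ of $G$ is a triangle, the edge $ab$ is in $G$, and $aw,bw$ are in $G^+$; so the three neighbors of $u$ in $G^+$ form a triangle and $u$ is a stacked vertex of $G^+$. Removing $u$ yields $(G-u)^+$, a planar 3-tree by induction, and reinserting $u$ preserves the 3-tree property.

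For $T_1$ and $T_2$ I traverse $G$ in construction order: whenever $v_k$ (with $k\ge 3$) is inserted on an ear edge $(v_a,v_b)$ of the current outer boundary, with $v_a$ nearer $v_1$ along the boundary path from $v_1$ to $v_2$ that avoids the edge $v_1v_2$, I put $v_kv_a$ into $T_1$ and $v_kv_b$ into $T_2$; the outer edge $v_1v_2$ is adjoined to $T_1$ so that $T_1$ and $T_2$ partition $E(G)$. Each 1-parent (resp.\ 2-parent) chain terminates at $v_1$ (resp.\ $v_2$) by induction, so $T_1$ and $T_2$ are trees rooted at $v_1$ and $v_2$. The main obstacle is verifying the Schnyder cyclic condition at each interior vertex $v_k$ of $G^+$. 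Since $T_n$ is a star, $v_k$ has no incoming $n$-edge; its three outgoing edges go to $v_a$, $v_b$, and $w$. The interior face $(v_a,v_b,v_k)$ of $G^+$ sits between $v_kv_a$ and $v_kv_b$ in the rotation at $v_k$, and every neighbor of $v_k$ inserted later in the construction lies on the $w$-side of these two edges: a vertex $v_j$ inserted on the $v_2$-side of $v_k$ contributes an incoming 1-edge sitting between the outgoing 2-edge and the outgoing $n$-edge, while one on the $v_1$-side contributes an incoming 2-edge between the outgoing $n$-edge and the outgoing 1-edge. The resulting cyclic pattern out-1, out-2, in-1, out-$n$, in-2 at $v_k$ matches the Schnyder condition with an empty in-$n$ block, so $(T_1,T_2,T_n)$ is the (unique) Schnyder realizer of $G^+$.
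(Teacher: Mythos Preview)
Your proof is correct and follows essentially the same approach as the paper: add an apex vertex in the outer face, connect it to every vertex of~$G$, show the result is a planar 3-tree via the ear-deconstruction, and build the Schnyder realizer by assigning each new vertex's left boundary neighbor to~$T_1$ and its right boundary neighbor to~$T_2$, with the apex star as~$T_n$.

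The only organizational difference is that the paper runs a single induction that simultaneously establishes the 3-tree property and the Schnyder condition (checking at each insertion that the new incoming edge at~$v_i$ lands in the correct wedge), whereas you first prove the 3-tree property in isolation and then verify the full cyclic pattern at an arbitrary interior vertex at the end. Your verification of the cyclic order is in fact more explicit than the paper's. There is also a small bookkeeping difference in how the outer-face edges are distributed among the trees: you adjoin~$v_1v_2$ to~$T_1$ to make $T_1\cup T_2=E(G)$, while the paper reassigns the two outer apex edges; both are harmless conventions that do not affect the Schnyder condition at interior vertices.
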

\begin{proof}
  We build the graph~$G$ according to the reversed deconstruction order.
  Let~$G_k$ denote the subgraph of~$G$ induced by the set $\{v_1,v_2,\ldots, v_k\}$.
  Further, let~$G'_k$ be the graph obtained by adding the vertex~$v_n$ and
  the edges $(v_i,v_n)$ for all $1 \le i \le k$ to~$G$.
  We prove by induction over~$k$ that there exists a Schnyder
  realizer induced by $T_1$, $T_2$, $T_n$ for~$G'_k$ 	
  such that the trees $T_1$ and $T_2$ form the graph~$G_k$ and~$G'_k$ is a
  planar 3-tree. Note that Felsner and Trotter~\cite{felsner2005} already proved
  this lemma without the statement that $G'_k$ is a planar 3-tree.

  For the base case $k=2$, our hypothesis is certainly true; see~$G'_2$ in
  Fig.~\ref{fig:outertrees}. Assume our assumption holds for
  some~$k$. In order to obtain~$G_{k+1}$ from $G_{k}$, we have to add
  the vertex~$v_{k+1}$ and two incident edges~$(v_i,v_{k+1})$ and $(v_j,v_{k+1})$.
  Assume that~$v_i$ is left of~$v_j$ on~$C_k$. We add $(v_i,v_{k+1})$ to~$T_1$ and $(v_j,v_{k+1})$
  to~$T_2$. This is safe since we cannot create a cycle in any of the trees.
  There is another a new edge $(v_{k+1},v_n)$ in $G'_{k+1}$ which we add to $T_n$.
  Again, no cycle can be created. The three outgoing tree edges at~$v_i$
  form three wedges (unless $v_i$ is the root of $T_1$).
  The new ingoing 1-edge $(v_i,v_{k+1})$ lies in the wedge bounded by
  the outgoing 2-edge and the outgoing $n$-edge. For the edge~$(v_j,v_{k+1})$, we
  can argue analogously. Hence, the three trees induce a Schnyder realizer; see also
  Fig.~\ref{fig:outertrees}.
  Moreover, the graphs~$G'_k$ and~$G'_{k+1}$ differ exactly by the vertex $v_{k+1}$
  that has been stacked into a triangular face of~$G'_k$; thus, since $G'_k$ is
  a planar 3-tree, so is~$G'_{k+1}$. We now have proven the induction hypothesis
  for $k+1$. To obtain the statement of the lemma, we take the Schnyder realizer
  for the graph~$G'_{n-1}$ and move the edge~$(v_1,v_n)$ from~$T_n$ to~$T_1$ and
  the edge~$(v_2,v_n)$ from~$T_n$ to~$T_2$. Now~$T_1$ and~$T_2$ form~$G$, and
  all three trees induce the Schnyder realizer of a planar 3-tree.
\end{proof}

\begin{figure}[t]
	\centering
\includegraphics{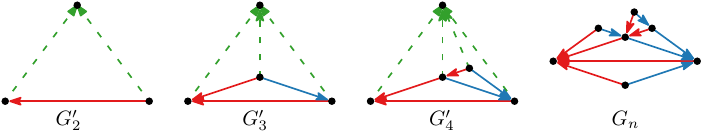}
\caption{Construction of an maximal outerplanar graph embedded in a planar 3-tree as done in the proof of Lemma~\ref{lem:outertrees}. The tree~$T_n$ is dashed.}
\label{fig:outertrees}
\end{figure}

We can now rely on our methods developed for the planar 3-trees.
The technique used in Theorem~\ref{thm:planar3tree}
produces a drawing of a planar 3-tree in which two of the trees of the Schnyder realizer
are drawn with single-edge segments, whereas the third tree uses as many segments
as it had leaves.

Consider a drawing according to Theorem~\ref{thm:planar3tree} of the planar
3-tree introduced in Lemma~\ref{lem:outertrees} that
contains the maximal outerplanar graph~$G$ as an induced subgraph.
By deleting~$T_n$, we obtain a drawing of~$G$. Note that in this
drawing the outer face is realized as an interior face, which can be
avoided (if this is undesired) by repositioning~$v_n$ accordingly.
The Schnyder realizer has $2n-5$ leaves in total, but $n-3$ of them
belong to~$T_n$. We can assume that $T_1$ has the smallest number of leaves, which
is at most $n/2-1$. We need $n-2$ segments for drawing~$T_2$ (one per edge), and three
edges for the triangle~$v_1,v_2,v_n$.
In total, we have at most $n/2-1+n-2 + 3= 3n/2$ segments. Since the drawing
is a subdrawing from our drawing algorithm for planar 3-trees, we get the same area
bound and running time as in the planar 3-tree scenario. We summarize
our results in the following theorem.

\begin{theorem}\label{thm:outerplanar}
    Every maximal outerplanar graph admits a straight-line drawing that uses at
    most~$3n/2$ segments on an $O(n)\times O(n^2)$ grid.
     This drawing can be computed in $O(n^2)$ time.
\end{theorem}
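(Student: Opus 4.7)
My plan is to reduce the problem to the planar 3-tree case. First, I apply Lemma~\ref{lem:outertrees} to obtain a planar 3-tree~$G'$ containing~$G$ as a spanning subgraph (via the addition of a single vertex~$v_n$ placed in the outer face of~$G$), together with a Schnyder realizer $(T_1, T_2, T_n)$ in which the edges of~$G$ are partitioned between~$T_1$ and~$T_2$ and~$T_n$ collects the new edges incident to~$v_n$. I then run the algorithm of Theorem~\ref{thm:planar3tree} to draw~$G'$ on an $O(n)\times O(n^2)$ grid, and finally delete the edges of~$T_n$ to obtain a straight-line drawing of~$G$ on the same grid. If the original outer face of~$G$ is to be preserved, $v_n$ can afterwards be repositioned outside the bounding box.

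The segment count leverages the structure of the planar 3-tree algorithm: by invariant~\ref{inv-segments}, the tree~$T_1$ is drawn with exactly one segment per leaf, whereas the edges of~$T_2$ and~$T_n$ are each drawn as individual segments. I label the three trees so that~$T_1$ has the fewest leaves among them. Since a planar 3-tree admits a unique (hence minimal) Schnyder realizer, by the bound of Bonichon~et~al.\ the total number of leaves is at most~$2n-5$. Moreover, the construction in Lemma~\ref{lem:outertrees} produces a~$T_n$ that is essentially a star centered at~$v_n$, contributing~$n-3$ of those leaves. The remaining at most~$n-2$ leaves are split between~$T_1$ and~$T_2$, so~$T_1$ has at most~$\lfloor(n-2)/2\rfloor \le n/2 - 1$ leaves and thus uses at most~$n/2 - 1$ segments.

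Summing up, $T_1$ contributes at most~$n/2-1$ segments; the edges of~$T_2$ contribute one segment each for a total of at most~$n-2$; and the three boundary edges of the outer triangle $v_1 v_2 v_n$ of~$G'$, which remain in the drawing of~$G$ after~$T_n$ is removed, contribute~$3$ further segments. Altogether this gives at most $n/2 - 1 + n - 2 + 3 = 3n/2$ segments. The area bound~$O(n)\times O(n^2)$ is inherited directly from Theorem~\ref{thm:planar3tree}, since deleting edges (and possibly relocating~$v_n$) does not enlarge the bounding box.

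The main subtlety I anticipate is the careful bookkeeping of which edges lie in which tree after the edge-reassignment performed in Lemma~\ref{lem:outertrees}: one must be careful not to double-count the edges $(v_1, v_n)$ and $(v_2, v_n)$, which were moved out of~$T_n$ into~$T_1$ and~$T_2$, when combining the tree contributions with the outer-triangle contribution. Once that accounting is settled, the bound follows from the analysis above without further difficulty.
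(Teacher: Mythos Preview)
Your proposal is correct and follows essentially the same route as the paper: embed $G$ in a planar 3-tree via Lemma~\ref{lem:outertrees}, draw it with the algorithm of Theorem~\ref{thm:planar3tree}, delete $T_n$, and count $\lambda(T_1)\le n/2-1$ segments for $T_1$, $n-2$ for $T_2$, and $3$ for the outer triangle. The only minor imprecision is that the relabeling ``so that $T_1$ has the fewest leaves'' must be restricted to swapping $T_1$ and $T_2$ (since $T_n$ must remain the star to be deleted); this is harmless because $T_n$ has $n-3\ge (n-2)/2$ leaves anyway, but it is worth stating explicitly.
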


\section{Triangulations with circular arcs}\label{sect:triangulations}

In this section, we present an algorithm to draw a triangulation using few circular arcs.
An embedded graph is a \emph{triangulation} (or maximal planar) if every face is a 3-cycle, including the outerface.
Our algorithm draws upon ideas for drawing triangulations with line segments by Durocher and Mondal \cite{durocher2014} as well as Schulz's algorithm for drawing 3-connected planar graphs with circular arcs~\cite{schulz2015}.

\begin{figure}[b]
\centering
\begin{subfigure}[b]{.45\textwidth}
  \centering
  \includegraphics[page=1]{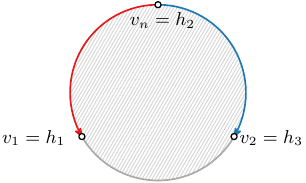}
  \caption{Initial state of the algorithm.}
  \label{fig:initialstate-1}
\end{subfigure}
\hfil
\begin{subfigure}[b]{.5\textwidth}
  \centering
  \includegraphics[page=2]{initialstate}
  \caption{The state after processing~$v_n$.}
  \label{fig:initialstate-2}
\end{subfigure}
\caption{Illustration of the algorithm to draw triangulations with few circular arcs. Hatching indicates the undrawn region.}
\label{fig:initialstate}
\end{figure}

Similar to Schulz~\cite{schulz2015}, a canonical order $v_1, \ldots, v_n$ on the vertices of a triangulation is reversed and used to structure our drawing algorithm.
We start by drawing $v_1$, $v_2$, and $v_n$ on a circle; see Fig.~\ref{fig:initialstate-1}. We assume that they are placed as shown and hence refer to the arc connecting $v_1$ and $v_2$ as \emph{the bottom arc}.
The interior of the circle is the \emph{undrawn} region~$\mathcal{U}$ which we maintain as a strictly convex shape.
The vertices incident to $\mathcal{U}$ are referred to as the \emph{horizon} and denoted $h_1, h_2, \ldots, h_{k-1}, h_k$ in order; see Fig.~\ref{fig:initialstate-2}. We maintain that~$h_1 = v_1$ and $h_k = v_2$.
Initially, we have $k = 3$ and $h_2 = v_n$.
We iteratively take a vertex~$h_i$ of the horizon (the latest in the canonical order) to process it.
That is, we draw its undrawn neighbors and edges between these, thereby removing $h_i$ from the horizon.
Note that $h_i$ will never be the first or last vertex on the horizon ($v_1$ or~$v_2$).

\paragraph{Invariant.}
We maintain as invariant that each vertex~$v$ (except $v_1$, $v_2$, and~$v_n$) has a
segment $\ell_v$ incident from above such that its downward extension intersects
the bottom arc strictly between $v_1$ and $v_2$.
Since $\mathcal{U}$ is strictly convex, observe that for any vertex $h$ on the horizon
the only intersection points of the line through $\ell_h$ with the undrawn region's boundary
are this intersection with the bottom arc and~$h$ itself.

\paragraph{Processing a vertex.}
To process a vertex $h_i$, we first consider the triangle $h_{i-1}h_ih_{i+1}$: this triangle (except for its corners) is strictly contained in $\mathcal{U}$.
We draw a circular arc $A$ from $h_{i-1}$ to $h_{i+1}$ with maximal curvature, but within this triangle; see Fig.~\ref{fig:determinearc-1}.
This ensures a plane drawing and maintains a strictly convex undrawn region.
Moreover, it ensures that $h_i$ can ``see'' the entire arc~$A$.

Vertex $h_i$ may have a number of neighbors that were not yet drawn.
We dedicate a fraction of the arc $A$ to placing these neighbors, .
In particular, this fraction is determined by the intersections of segments $v_1h_i$ and $v_2h_i$ with $A$; see Fig.~\ref{fig:determinearc-2}. By convexity of~$\mathcal{U}$, these intersections exist.
If $h_{i-1}$ is equal to $v_1$, then the intersection for $v_1h_i$ degenerates to $v_1$; similarly, the intersection of $v_2h_i$ may degenerate to $v_2$.
We place the neighbors in order along this designated part of~$A$, drawing the relevant edges as line segments.
This implies that all these neighbors obtain a line segment that extends to intersect the bottom arc, maintaining the invariant.
We position one neighbor to be a continuation of segment $\ell_{h_i}$, which by the invariant must extend to intersect the designated part of $A$ as well.
\begin{figure}[t]
\centering
\subcaptionbox{\label{fig:determinearc-1}}{\centering\includegraphics[page=1]{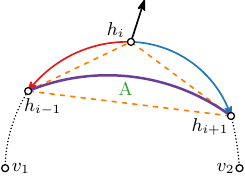}}
\hfil
\subcaptionbox{\label{fig:determinearc-2}}{\centering\includegraphics[page=2]{determinearc}}
\caption{(a) Arc $A$ lies inside the dashed triangle $h_{i-1}h_ih_{i+1}$. (b) Undrawn neighbors of $h_i$ are placed on $A$, in the section determined by $v_1$ and $v_2$. One neighbor is placed to align with $\ell_{h_i}$ towards a predecessor of $h_i$.}
\label{fig:determinearc}
\end{figure}
Two examples of fully drawn graphs are given in Fig.~\ref{fig:example}.
Note that the angular resolution of these drawings decreases very rapidly as~$n$ increases.
Resolving this issue, e.g. through ensuring a polynomial-size grid, remains an open problem.
Another open problem is how many arcs we need if we restrict solutions to a polynomial-size grid.

\begin{theorem}\label{thm:triangle-arc}
	Every triangulation admits a circular arc drawing that uses at most $(5n-11)/3 = 5e/9 - 1/3$ arcs. This drawing can be computed in $O(n)$ time.
\end{theorem}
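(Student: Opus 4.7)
The plan is to analyze the incremental algorithm described above and establish two things: (1) the algorithm terminates with a valid plane drawing, and (2) the number of geometric primitives it uses is bounded by $(5n-11)/3$. The correctness is largely packaged into a single invariant, so the proof structure will be an induction on the processing steps, followed by a counting argument using the Schnyder realizer.

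First, I would verify by induction on the number of processed vertices that the following properties hold after each step: the undrawn region $\mathcal{U}$ is a strictly convex set, the drawing built so far is plane, and for every vertex $v \notin \{v_1, v_2, v_n\}$ there is an incoming segment $\ell_v$ from above whose downward extension crosses the bottom arc strictly between $v_1$ and $v_2$. The base case (after placing $v_1$, $v_2$, $v_n$ and processing $v_n$) holds by inspection of Fig.~\ref{fig:initialstate}. For the inductive step, when processing $h_i$, the arc $A$ is drawn with maximal curvature inside the triangle $h_{i-1}h_ih_{i+1} \subseteq \mathcal{U}$, which preserves strict convexity of $\mathcal{U}$ after the step; planarity follows because all new neighbors of $h_i$ are placed on $A$ in the sub-arc visible from both $v_1$ and $v_2$, and the new edges are segments from $h_i$ into this visible region. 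The invariant for each newly drawn vertex is preserved because each is placed on the designated portion of $A$, whose defining segments $v_1 h_i$ and $v_2 h_i$ intersect the bottom arc between $v_1$ and $v_2$; extending such a segment downward stays strictly between those bounds.

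Next, I would carry out the counting. Choose the canonical order from the canonical ordering tree of the minimal Schnyder realizer with the fewest leaves, and by symmetry call it $T_n$. Each processing step of $h_i \neq v_3$ introduces exactly one circular arc (for the portion of $A$ used), giving $n-3$ arcs from processing, plus the initial circle and the already-drawn arc created when processing $v_n$; a careful bookkeeping yields $n-2$ arcs in total. The remaining primitives are straight segments, each representing an $n$-edge of $T_n$. Here the algorithm explicitly aligns one of the undrawn neighbors of $h_i$ with the existing segment $\ell_{h_i}$, meaning that the incoming $T_n$-edge chosen as the continuation merges with the outgoing $T_n$-edge of $h_i$ into a single segment. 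This collinearity is precisely what the invariant was designed to enable.

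The main obstacle, and the step I would dwell on most carefully, is the charging argument: each segment should be charged uniquely to a leaf of $T_n$. For every internal node of $T_n$, one of its incoming $T_n$-edges is absorbed into the segment already paying for the outgoing edge (via $\ell_{h_i}$), so only leaves of $T_n$ generate a fresh segment charge. Using the Bonichon et al.\ bound that the total number of leaves in a minimal Schnyder realizer is at most $2n - 5 - \Delta_0 \leq 2n - 5$, and the fact that we chose $T_n$ to have the fewest leaves among the three trees, the number of leaves of $T_n$ is at most $(2n - 5)/3$. Summing, the visual complexity is at most $(n - 2) + (2n - 5)/3 = (5n - 11)/3$, and substituting $e = 3n - 6$ rewrites this as $5e/9 - 1/3$, yielding the claimed bound.
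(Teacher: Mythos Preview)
Your proposal is correct and follows essentially the same approach as the paper: maintain the convexity/visibility invariant while processing the reversed canonical order, count $n-2$ arcs for the horizon plus at most $(2n-5)/3$ segments charged to the leaves of the minimum-leaf tree~$T_n$ in the minimal Schnyder realizer. Your arc bookkeeping sentence is slightly garbled (the ``already-drawn arc created when processing $v_n$'' is one of the $n-3$ processing arcs, not an extra one; the count is simply $1$ initial circle plus $n-3$ arcs from processing $v_n,\ldots,v_4$), but you arrive at the correct total.
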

\begin{proof}
We first analyze the number of circular arcs in our drawing.
We perform our algorithm using the canonical order induced by the canonical ordering tree in the minimal Schnyder realizer having the fewest leaves; without loss of generality, let~$T_n$ be this tree. Recall that $T_n$ has at most $(2n - 5 - \Delta_0)/3$ leaves; since $\Delta_0 \geq 0$, we simplify this to $(2n - 5)/3$ for the remainder of the analysis.
We start with one circle and subsequently process $v_n, \ldots, v_4$, adding one circular arc per vertex (representing edges in $T_1$ and $T_2$) and a number of line segments (representing edges in $T_n$).
Note that processing~$v_3$ has no effect since the edge~$v_1v_2$ is the bottom arc.
Counting the circle as one arc, we thus have $n - 2$ arcs in total.
At every vertex in $T_n$, one incoming edge is collinear with the outgoing one towards the root.
Hence, we charge each line segment uniquely to a leaf of $T_n$: there are at most $(2n-5)/3$ segments.

Thus, the total visual complexity is at most $n - 2  + (2n - 5)/3 = (5n - 11)/3$.
In particular, this shows that, with circular arcs, we obtain greater expressive power for a nontrivial class of graphs in comparison to the $2n$ lower bound that is known for drawing triangulations with line segments.
Since a triangulation has $e = 3n - 6$ edges, the visual complexity can also be expressed as $5e/9 - 1/3$.

We complete the proof by considering the time bound.
The canonical order can be computed in linear time~\cite{dpp-hdpgg-C90}.
We then process the vertices in reverse canonical order. Processing a vertex $v$ of degree $d$ takes $O(1 + d)$ time: we need only a constant number of computations to place arc $A$ and find the subarc where we can place the undrawn neighbors of $v$; the placement of these takes $O(1)$ per neighbor.
Since the graph is planar, the sum over degrees is linear, and thus we obtain an $O(n)$ algorithm for computing this drawing.
\end{proof}

\begin{figure}[t]
\centering
\includegraphics{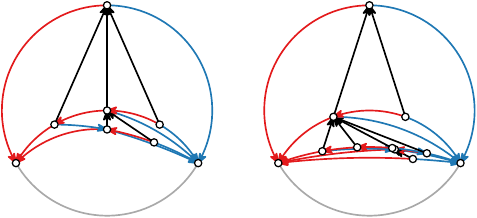}
\caption{Drawings produced by the algorithm with $n = 7$ (left) and $n = 10$ (right).}
\label{fig:example}
\end{figure}

This bound readily improves upon the result for line segments ($7e/9-10/3$) by Durocher and Mondal \cite{durocher2014}.
Schulz~\cite{schulz2015} proved an upper bound of $2e/3$ arcs.
The bound above is an improvement on this result, though only for triangulations.

\paragraph{Degrees of freedom.}
One circular arc has five degrees of freedom (DoF), which is one more than a line segment.
Hence, our algorithm with circular arcs uses at most $5 \cdot (5n - 11)/ 3 = (25n - 55)/3$ DoF, even if we disregard any DoF reduction arising from the need to have arcs coincide at vertices.
This remains an improvement over the result of Durocher and Mondal~\cite{durocher2014}, using at most $4 \cdot (7n - 10)/3 = (28n - 40)/3$ DoF.
The lower bound for line segments ($4 \cdot 2n = 24n/3$) is lower than what we seem to achieve with our algorithm.
However, our algorithm uses line segments rather than arcs to draw the tree~$T_n$.
Thus, the actual DoF employed by the algorithm is $5(n - 2) + 4 \cdot (2n - 5)/3 = (23n - 50)/3$, which is in fact below the lower bound for line segments.

\paragraph{4-connected triangulations.}
We may further follow the rationale of Durocher and Mondal~\cite{durocher2014} by applying a result by Zhang and He~\cite{zhang2005}.
Using regular edge labelings, they proved that a 4-connected triangulation admits a canonical ordering tree with at most $\lceil(n+1)/2\rceil$ leaves~\cite{zhang2005}.
Applying this to our analysis, we find that our algorithm uses at most $n - 2 + \lceil(n+1)/2\rceil = \lceil (3n - 3)/2\rceil \leq 3n/2 - 1$ arcs.

\begin{theorem}\label{thm:4-conn-triangle-arc}
	Every 4-connected triangulation admits a circular arc drawing that uses at most $3n/2 - 1 = e/2 + 2$ arcs. This drawing can be computed in $O(n)$ time.
\end{theorem}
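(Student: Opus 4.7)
The plan is to invoke the algorithm of Theorem~\ref{thm:triangle-arc} essentially unchanged, but to feed it a more economical canonical ordering tree. Inspecting the complexity analysis of Theorem~\ref{thm:triangle-arc}, the bound decomposes into two independent contributions: exactly $n-2$ circular arcs (the initial circle plus one arc per processed vertex, with processing $v_3$ having no effect), and one line segment per leaf of the tree $T_n$ whose canonical order dictates the processing. The $n-2$ arc count is structural and does not depend on the choice of canonical order. Hence any upper bound on the number of leaves of a canonical ordering tree translates directly into an upper bound on the total visual complexity.

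The next step is to replace the leaf bound $(2n-5)/3$, which came from using the minimal Schnyder realizer and Bonichon et al.'s total-leaf bound, by a sharper one available for 4-connected triangulations. Zhang and He~\cite{zhang2005} showed, via regular edge labelings, that every 4-connected triangulation admits a canonical ordering tree with at most $\lceil (n+1)/2 \rceil$ leaves. I would simply take this tree to play the role of $T_n$ in the algorithm and run the procedure of Theorem~\ref{thm:triangle-arc} using the canonical order it induces; the algorithm's correctness and planarity arguments do not use any specific property of the minimal realizer beyond being a canonical ordering tree, so nothing else changes.

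Adding the two contributions gives
\[
(n-2) + \lceil (n+1)/2 \rceil \;=\; \lceil (3n-3)/2 \rceil \;\le\; 3n/2 - 1
\]
arcs in total. Since a triangulation has $e = 3n-6$, we can rewrite $3n/2 - 1 = (e+6)/2 - 1 = e/2 + 2$, yielding both forms in the statement.

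I do not expect a real obstacle here: the only substantive ingredient beyond Theorem~\ref{thm:triangle-arc} is quoting the Zhang--He leaf bound, and the arithmetic is elementary. The one subtlety worth a sentence of care is confirming that the algorithm of Theorem~\ref{thm:triangle-arc} is agnostic to which canonical ordering tree is used (so that swapping in the Zhang--He tree is legal), and that the arc-versus-segment bookkeeping — one segment chargeable to each leaf of the chosen tree — still goes through verbatim.
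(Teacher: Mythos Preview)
Your proposal is correct and matches the paper's own argument essentially verbatim: invoke the Zhang--He bound of at most $\lceil (n+1)/2 \rceil$ leaves in a canonical ordering tree of a 4-connected triangulation, plug it into the $n-2$ arcs plus one-segment-per-leaf accounting from Theorem~\ref{thm:triangle-arc}, and simplify. The paper does not even spell out the observation that the algorithm is agnostic to the choice of canonical ordering tree, so your write-up is if anything slightly more careful.
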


\section{Planar graphs with circular arcs}
\label{sect:grapharcs}

The algorithm for triangulations of the previous section easily adapts to draw a general planar graph~$G$ with $n\ge 3$ vertices and~$e$ edges.
As connected components can be drawn independently, we assume~$G$ is connected.
We need to only triangulate $G$, thereby adding $3n - e - 6$ chords; this takes linear time and can immediately also produce the necessary canonical order (e.g. Section 6 of \cite{kantthesis}).
We then run the algorithm described in Theorem~\ref{thm:triangle-arc}.
Finally, we remove the chords from the drawing.
Each chord may split an arc into two arcs, thereby increasing the total complexity by one.
From Theorem~\ref{thm:triangle-arc}, it follows that we obtain a drawing of $G$ using $(5n/3 - 11/3) + (3n - e - 6) = 14n/3 - e - 29/3$ arcs.
Since $e$ is at least $n-1$, a very rough upper bound is $11n/3 - 26/3$.

\begin{theorem}\label{thm:planar-arc}
	Every planar graph with $n\ge 3$ admits a circular arc drawing with at most $14n/3 - e - 29/3$ arcs. This drawing can be computed in $O(n)$ time.
\end{theorem}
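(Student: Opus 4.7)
The plan is to reduce the problem to the triangulated case already handled by Theorem~\ref{thm:triangle-arc}. Since disjoint components of a planar graph can be drawn independently and their arc counts simply add, I would first assume without loss of generality that $G$ is connected. Then I would triangulate $G$ by inserting exactly $3n - 6 - e$ chords inside faces of some fixed planar embedding, making the resulting graph a triangulation on the same vertex set.

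Next, I would invoke Theorem~\ref{thm:triangle-arc} to draw the triangulation using at most $(5n - 11)/3$ circular arcs. From this drawing I would recover a drawing of $G$ by erasing the portions of the primitives that correspond to the added chords. The central accounting step is to bound how many extra geometric primitives each chord deletion can create in the residual drawing.

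In the algorithm of Theorem~\ref{thm:triangle-arc}, every edge is drawn either as a line segment (for edges of $T_n$) or as a sub-piece of one of the circular arcs added per processed vertex (for edges of $T_1 \cup T_2$). Erasing an interior portion of a line segment leaves at most two collinear segments on the same supporting line, and erasing an interior portion of a circular arc leaves at most two subarcs of the same supporting circle; endpoint cases leave at most one piece. Hence the deletion of each chord increases the total primitive count by at most one, so the final drawing of $G$ uses at most
\[
\frac{5n-11}{3} + \bigl(3n - 6 - e\bigr) \;=\; \frac{14n}{3} - e - \frac{29}{3}
\]
circular arcs, matching the bound in the theorem.

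I do not anticipate any deep obstacle: the geometric heavy lifting is already packaged in Theorem~\ref{thm:triangle-arc}, and the remaining argument is bookkeeping of how many new primitives a single chord deletion can produce. The only point requiring care is verifying that the residual pieces of a split primitive are themselves valid single primitives (collinear on the same line, or cocircular on the same circle), which is immediate from the fact that we split each primitive geometrically along its own support.
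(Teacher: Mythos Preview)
Your proposal is correct and follows essentially the same approach as the paper: assume connectivity, triangulate by adding $3n-6-e$ chords, apply Theorem~\ref{thm:triangle-arc}, then delete the chords while observing that each deletion increases the primitive count by at most one. Your justification of the ``at most one'' increase is in fact slightly more explicit than the paper's, which simply asserts that each chord may split an arc into two.
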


Again, this bound readily improves upon the upper bound for line segments ($16n/3-e-28/3$) by Durocher and Mondal \cite{durocher2014}.
Provided the graph is 3-connected, Schulz's~\cite{schulz2015} bound of $2e/3 - 1$ is lower than our bound, but only for sparse-enough graphs having $e < 14n/5 - 26/5$.
However, there are planar graphs that are not 3-connected with as many as $3n - 7$ edges (one less than a triangulation): there is no sparsity for which planar graphs must be 3-connected and Schulz's bound is lower than our result.
In case the original graph $G$ is 4-connected, extending it to a triangulation by adding edges does not violate this property.
Repeating the above analysis using the improved bound of Theorem~\ref{thm:4-conn-triangle-arc} yields us the following result.

\begin{theorem}\label{thm:4-conn-planar-arc}
	Every 4-connected planar graph admits a circular arc drawing with at most $9n/2 - e - 7$ arcs.  This drawing can be computed in $O(n)$ time.
\end{theorem}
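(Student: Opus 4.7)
The plan is to reduce the general 4-connected planar case to the 4-connected triangulation case, mirroring the chord-removal argument that turned Theorem~\ref{thm:triangle-arc} into Theorem~\ref{thm:planar-arc}, but starting from the stronger bound of Theorem~\ref{thm:4-conn-triangle-arc}. Let $G$ be a 4-connected planar graph with $n \ge 3$ vertices and $e$ edges. First I would triangulate $G$ by inserting $3n - e - 6$ chords into its faces to obtain a triangulation $G'$. The essential observation, already noted in the preceding paragraph of the paper, is that adding edges can never decrease vertex connectivity, so $G'$ is itself a 4-connected triangulation on the same vertex set.

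Next, I would apply Theorem~\ref{thm:4-conn-triangle-arc} to $G'$ to obtain a circular arc drawing of $G'$ using at most $3n/2 - 1$ arcs. To recover a drawing of the original graph~$G$, I would delete the $3n - e - 6$ chords from this drawing. As already argued in the proof of Theorem~\ref{thm:planar-arc}, removing a single edge can split at most one arc into two pieces and therefore increases the visual complexity by at most one; summing over all deleted chords gives an overall increase of at most $3n - e - 6$.

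Combining the two contributions yields a total visual complexity of at most
\[
\bigl(3n/2 - 1\bigr) + \bigl(3n - e - 6\bigr) \;=\; 9n/2 - e - 7,
\]
which is exactly the bound claimed by the theorem. The only nonroutine ingredient is the preservation of 4-connectivity when triangulating, and this is immediate since adding edges cannot disconnect a graph or create a small separator. Consequently, I do not anticipate any substantive obstacle beyond reusing the machinery already established in Theorems~\ref{thm:4-conn-triangle-arc} and~\ref{thm:planar-arc}.
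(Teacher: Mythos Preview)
Your proposal is correct and follows exactly the paper's own argument: the paper simply remarks that triangulating a 4-connected planar graph preserves 4-connectivity and then reruns the chord-removal analysis of Theorem~\ref{thm:planar-arc} starting from the $3n/2-1$ bound of Theorem~\ref{thm:4-conn-triangle-arc}, arriving at $(3n/2-1)+(3n-e-6)=9n/2-e-7$.
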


Since any planar graph can be drawn trivially with $e$ arcs (or line segments), the above results are an improvement over the trivial bound, only if $e > 14n/3 - e - 29/3$ for planar graphs or $e > 9n/2 - e - 7$ for 4-connected planar graphs. That is, we need $e > 7n/3 - 29/6$ or $e > 9n/4 - 7/2$, fairly dense graphs.

\paragraph{Heuristic improvement.} 
We investigate here a simple improvement upon the above by choosing a ``good triangulation''.
However, we must finally conclude that for worst-case bounds, this improvement is not noticeable.

For the improvement, we observe that at most two pairs of edges at every vertex share an arc: two edges on the horizon and two edges of~$T_n$.
We thus reduce the necessary geometric primitives by choosing a specific triangulation:
for every face $\langle v_1,\ldots,v_k,v_1\rangle$ with $k>3$, we pick a single vertex, say~$v_1$, and triangulate the
face by connecting~$v_1$ to $v_3,\ldots,v_{k-1}$ with temporary chords.
Note that planarity ensures that there is a vertex for every face
such that we can add these edges without creating multi-edges:
the graph induced by $v_1,\ldots,v_k$ is outerplanar, so it has a degree-2 vertex.
This way, when we remove the temporary chords in the final step of the algorithm,
the number of arcs that are split into two can be reduced.

We may even further save on complexity by selecting the same vertex for multiple adjacent faces.
Bose et al.~\cite{bkl-wcoag-cg03} have shown that one can select in $O(n)$ time
a set~$S$ of at most $\lfloor n/2\rfloor$ vertices such that every face is incident
to at least one vertex of this set.
Note that we would need to only cover faces of size $|f| \geq 4$, as triangular faces do not receive any temporary chords. 
We now triangulate each face to one of its incident vertices in $S$. This means that the temporary chords do not increase the visual complexity, except for up to two chords at every vertex in $S$.

We define the total reduction $R$ as the number of arcs that no longer cause a split of an arc in the analysis of general planar graphs.
The complexity bound thus becomes $14n/3 - e - 29/3 - R$.
By the above rationale, we can express $R$ as \[\sum_{f \in F} (|f| - 3) - 2|S| = \sum_{f \in F} |f| - 3|F| - 2|S| = 2e - 3|F| - 2S,\] where $F$ is the set of faces in the graph $G$.
By Euler's formula, we have $|F| = 2 + e - n$ and we know $S \leq n/2$ by the result of Bose et al.~\cite{bkl-wcoag-cg03}.
We this get that $R \geq 2n - e - 6$.

We can thus guarantee that this heuristic provides a reduction if $e < 2n - 6$, that is, for sparse enough graphs.
Under this assumption, we find that the final bound on the number of arcs is
\[\frac{14n}{3} - e - \frac{29}{3} - R = \frac{14n}{3} - e - \frac{29}{3} - 2n + e + 6 = \frac{8n}{3}-\frac{11}{3}.\]
It is clear that the trivial bound of $e$ performs better than this heuristic for rather sparse graphs, that is, when $e > 8n/3-11/3$.
As we need $e < 2n - 6$ for a the worst-case reduction to be positive, we thus need that $8n/3-11/3 < e < 2n - 6$. This is however not possible for any positive $n$.

We must conclude that this improvement has no effect in the worst-case scenario on the complexity bound for general planar graphs.
This is because the need for large faces for the improvement to be noticeable is
directly opposite the need for a dense graph for our bound to be better than the trivial bound.
However, this method can still be used as a heuristic and does reduce visual complexity, when a graph has multiple faces of size at least~$6$, or when we can find a suitable small set $S$.

Improvements on $S$ to cover the non-triangular faces would readily improve the bounds here as well. For example, we may also take the trivial bound $|S| \leq |F|$. However, going through the same analysis tells us that the reduction must be positive if $e < 5n/8 - 10/8$, that is, for even sparser graphs -- the bound by Bose et al.~\cite{bkl-wcoag-cg03}.

\section{Conclusions}

We investigated the visual complexity of graphs: the number of line segments or circular arcs needed to draw a planar graph.

We provided algorithms to construct segment drawings for various planar graph classes, such that the resulting coordinates are bounded integers;
in other words, we bound the height and width of the integer grid needed for the drawing.
In particular, we have shown how to draw trees of $n$ vertices on an $O(n^2)\times O(n^{1.58})$ grid using at most $3n/4$ segments.
This algorithm can be adapted to achieve the optimal $\theta/2$ segments, where $\theta$ is the number of odd degree vertices, if we increase the grid size to be quasi-polynomial.
Moreover, we described an algorithm to draw maximal outerplanar graphs and planar 3-trees on an $O(n)\times O(n^2)$ grid, using at most $3n/2$ and $(8n - 17)/3$ segments respectively.

Finally, we provided a linear-time algorithm to construct a drawing of an $n$-vertex triangulation using at most $(5n-11)/3$ arcs, though coordinates are not restricted to an integer grid.
This readily gives an algorithm for general planar graphs as well; for both cases, having a 4-connected graph reduces the number of arcs needed.

\paragraph{Open problems.}
There remain a number of interesting avenues for further research.
Can we draw trees using $\theta/2$ segments on a polynomially-sized grid?
For arc drawings, no results on bounded grids are known beyond the result for trees by Schulz \cite{schulz2015}.
In general, stronger lower bounds beyond the general counting arguments (see introduction) are missing;
this prevents us from arguing optimality for many cases, beyond those where the lower bounds can actually always be achieved.


\bibliographystyle{abbrvurl}
\bibliography{abbrv,fewarcs}

\end{document}